\newtheorem{theorem}{Theorem}[section]
\newtheorem{lemma}[theorem]{Lemma}
\theoremstyle{definition}
\newtheorem{definition}[theorem]{Definition}
\theoremstyle{remark}
\newtheorem{remark}[theorem]{Remark}
\numberwithin{equation}{section}
\newcommand{\bfs}{\mathbf{s}} 
\newcommand{\bfr}{\mathbf{r}}
\newcommand{\Do}{\mathcal{D}} 
\newcommand{\pDo}{\partial\mathcal{D}}
\newcommand{\om}{\omega}
\newcommand{\xk}{x^{\ast}}
\newcommand{\EE}{\mathbb{E}} 
\newcommand{\E}{\mathrm{e}} 
\newcommand{\Prec}{\mathbf{Q}}
\newcommand{\Na}{\mathbb{N}}
\newcommand{\R}{\mathbb{R}}
\newcommand{\Rd}{\mathbb{R}^{d}}
\newcommand{\Or}{\mathcal{O}}
\newcommand{\cov}{\bm{\Sigma}}
\newcommand{\pa}{\partial}
\newcommand{\bfo}{\mathbf{0}}
\newcommand{\bfx}{\mathbf{x}}
\newcommand{\bfk}{\mathbf{k}}
\newcommand{\kk}{\lVert \mathbf{k} \rVert}
\newcommand{\bfC}{\mathbf{C}}
\newcommand{\ff}{\rho}
\newcommand{\He}{\mathrm{He}}
\newcommand{\rr}{\|{\bfr}\|} 
\newcommand{\Ha}{{\mathcal H}}  
\newcommand{\Grid}{{\mathcal{G}}}
\newcommand{\ft}{{\mathrm{FT}}}
\newcommand{\ift}{{\mathrm{IFT}}}
\newcommand{\D}{{\mathrm{d}}}
\newcommand{\I}{\jmath}
\newcommand{\bmthe}{{\bm \theta}}  
\newcommand{\bmal}{{\bm \alpha}}  
\newcommand{\Samp}{{\mathbb{S}_N}}
\newcommand{\Kse}{K_{{\mathrm{se}}}}
\newcommand{\Kseft}{\tilde{K}_{{\mathrm{se}}}}
\newcommand{\la}{\lambda}
\newcommand{\beq}{\begin{equation}}
\newcommand{\eeq}{\end{equation}}
\newcommand{\bit}{\begin{itemize}}
\newcommand{\eit}{\end{itemize}}
\begin{document}

\title{Boltzmann-Gibbs Random Fields with Mesh-free Precision Operators  Based on Smoothed Particle Hydrodynamics}

\author{Dionissios~T.~Hristopulos\thanks{\texttt{dchristopoulos@ece.tuc.gr};  Corresponding author}\hspace{2pt}}

\affil{School of Electrical and Computer Engineering, \\Technical University of Crete, Chania, 73100 Greece}

\maketitle


\begin{abstract}
Boltzmann-Gibbs random fields are defined in terms of the exponential expression $\exp\left(-\Ha\right)$, where $\Ha$ is a suitably defined
energy functional of the field states $x(\bfs)$. This paper presents a new Boltzmann-Gibbs  model which features local interactions in
the energy functional. The interactions are embodied in a spatial coupling function which uses smoothed kernel-function approximations of spatial derivatives inspired from the theory of smoothed particle hydrodynamics.  A specific model for the interactions based on a second-degree polynomial of the  Laplace operator is studied. An explicit, mesh-free  expression  of  the spatial coupling function (precision function) is derived for the case of the squared exponential (Gaussian) smoothing kernel. This coupling function allows the model to seamlessly extend from  discrete data vectors to continuum fields.   Connections with Gaussian Markov random fields and the Mat\'{e}rn field with $\nu=1$ are established.
\end{abstract}

\maketitle

\newpage

\section{Introduction}
The particle-wave duality is a central concept of quantum mechanics. It refers to the fact that physical entities can be described as either waves (continuum functions) or particles (quanta), and each viewpoint has advantages depending on the particular phenomenon studied. For example, in studies of light propagation it is more suitable to view light as an electromagnetic wave; in contrast, in the photoelectric effect,  light is best modeled as discrete particles (photons).   There is a similar duality in probability and statistics: Random fields are defined over continuum support spaces,  extending throughout an entire spatial domain, like waves. On the other hand, data-driven applications focus on discretely sampled states, in which  the measured values of the field bear resemblance to discrete particles. The field values  at unmeasured points then need to be inferred (in terms of a  marginal probability distribution conditioned on the data) in order to obtain a complete representation of the studied process  in continuum space.

For Gaussian random fields defined in terms of the expectation and the covariance function it is straightforward to reconcile the continuum and discrete viewpoints: if the expectation and covariance functions are known, the joint probability density function is known for every countable set of points.  In addition, the marginal distribution of the field states is known  at every point in the domain of definition. Furthermore, the \emph{consistency (marginalization)} property of the covariance function implies that if $S_{1} \subset S$ are two sets of sampling points, the covariance matrix $\cov_{1}$ between any pairs of points in $S_{1}$ is given by the submatrix of $\cov$ corresponding to $S_{1}$.


In contrast to the covariance, the consistency property does not hold if the spatial correlations are defined in terms of the inverse covariance (precision) matrix~\cite{Rasmussen06}.
To illustrate this point, consider  the paradigmatic model of the \emph{free Gaussian field theory} used in statistical physics~\cite{Mussardo10}; a rigorous mathematical treatment of this model is given in~\cite{Sheffield07}. The energy functional of the free Gaussian field over a compact domain $\Do \subset \Rd$ is given by
\beq
\label{eq:free-field}
\Ha = c \int_{\Do} \D\bfs \, \sum_{i=1}^{d} \left( \frac{\pa x(\bfs)}{\pa s_{i}}\right)^2 \, .
\eeq

 In the free Gaussian  model, the spatial dependence is introduced via the field's spatial derivatives $\{ \pa x(\bfs) / \pa s_{i} \}_{i=1}^{d}$. In order to determine the energy for a discrete set of positions, e.g., over the nodes of a square grid $\{ \bfs_{n} \}_{n=1}^{N} = \Grid$,  the derivatives are approximated by means of finite differences. This approximation alters the short-range behavior of the model compared to the continuum~\cite{dth03}.  Assuming a forward-difference  discretization of the partial derivatives on $\Grid$, the energy function of Eq.~\eqref{eq:free-field} can be expressed as
 $\Ha = \frac{1}{2} \sum_{i,j }x_{i} Q_{i,j} x_{j}$, where the double summation is over all sites $\bfs_{i}, \bfs_{j} \in \Grid$,  and the entries $Q_{i,j}$ of the precision matrix are non-zero  if either  $\bfs_{i}=\bfs_{j}$ or if $\bfs_{i}$ and $\bfs_{j}$ are nearest neighbors~\cite{ejs21}.

If we now consider a subset $\Grid' \subset \Grid$ of grid nodes, the respective precision matrix $\Prec'$ is not simply a submatrix of $\Prec$.  As an example, consider a set of  sites along the one-dimensional real line $\Grid: s_{n}=n a$, $n= 1, \ldots, 2N$, where $N\in \Na$.  Using forward difference discretization, the precision matrix $\Prec$ couples each point $s_{n}$ with its nearest neighbors, $s_{n \pm 1}$\footnote{Assuming periodic boundary conditions, $s_{2N+1}=s_{1}$ and $s_{0}=s_{2N}$.}.  Next, consider the reduced system  obtained by dropping every second node of $\Grid$. To obtain the coupling between the remaining nodes (which are uncoupled in the initial system), one needs to marginalize the joint distribution over the nodes that are removed.  The resulting precision matrix $\Prec'$ of the reduced system is not a submatrix of $\Prec$.

For a set of scattered data points $\Samp$~\cite{Franke82b}, the discretization of the derivatives is an additional issue. Since the sampling positions are irregularly spaced, there is no straightforward approximation for the spatial derivatives.  It is possible to construct spatial models which mimic the spatial derivatives by means of weighted spatial increments~\cite{dth15c,dth19}.  However, such models do not satisfy the consistency property and thus cannot bridge the continuum and discrete viewpoints in a straightforward manner.


In this paper we attempt to address the discretization and consistency issues following an approach  inspired by \emph{smoothed particle hydrodynamics (SPH)}~\cite{Monaghan92,Monaghan05}. SPH is a computational mesh-free Lagrangian method used for simulating physical phenomena in continuum media. Since SPH does not rely  on a numerical grid,  it is ideal  for processes which involve complex boundary dynamics. Our interest in SPH is stimulated by its ability to model spatial derivatives in terms of smoothing kernel functions without the requirement for an underlying numerical grid.  We take advantage of this idea to formulate a Boltzmann-Gibbs Gaussian random field with an explicitly defined \emph{spatial coupling function (precision function)}.  The latter determines the interactions between the field values at different locations that are arbitrarily distributed in a domain $\Do$. The random field thus defined extends the notion of Gauss-Markov random fields by introducing a mesh-free, distance-dependent precision function which can be used to derive a respective precision matrix for any configuration of sampling points.

The main results presented in this paper are as follows: Theorem~\ref{theorem:gaussian} proves that a precision operator given by the \emph{second-degree polynomial of the Laplace operator} (LAP2) defines a Gaussian measure.  Theorem~\ref{theorem:sph-lap2-disc}  establishes a new Gaussian random field using LAP2 and the SPH representation.  Theorem~\ref{theorem:sph-lap2-prec}  derives the LAP2-SPH \emph{spatial coupling (precision) function} in terms of the kernel-based, spatial interaction smoothing function and its derivatives. Finally,   Theorem~\ref{theorem:sph-lap2-prec-gauss}  obtains an explicit expression for the LAP2-SPH precision function if a Gaussian  smoothing kernel  is applied.

The structure of the paper is as follows: Section~\ref{sec:prelim} presents  definitions used in subsequent sections. Section~\ref{sec:gauss-bg} defines a Gaussian measure based on the second-degree polynomial of the Laplace operator. In Section~\ref{sec:gauss-bg-kernel} this construction is modified to include  kernel smoothing, leading to the LAP2-SPH representation. The extension of the SPH formulation to the continuum is demonstrated in Section~\ref{sec:sph-lap2-continuum}. In Section~\ref{sec:discussion} connections of the theory and possibilities for future research are discussed.  Finally, Section~\ref{sec:conclusions} summarizes our conclusions.

\section{Preliminaries}
\label{sec:prelim}

\begin{definition}[Vectors and matrices]
We  use boldfaced letters, e.g., $\mathbf{A}$, to denote vectors and matrices. The symbol $\mathbf{A}^\top$ will indicate  the transpose of $\mathbf{A}$.
The position vector in $\Rd$ will be denoted by $\bfs \in \Rd$.
\end{definition}

\begin{definition}[The Laplacian and Biharmonic operators]
 $\nabla$ is the vector gradient operator with components $\nabla_{i}= \pa / \pa s_{i}$, $i=1, \ldots, d$.  The squared gradient  $\nabla^2 = \nabla \cdot \nabla =  \sum_{i=1}^{d} \pa^{2} / \pa s_{i}^{2}$,  defines the  \emph{Laplace operator}. The symbol $\triangle$ is also used to denote the Laplacian.

 The square of the Laplace operator, i.e., $\nabla^{4}= \triangle^2$, will denote the \emph{biharmonic} or \emph{Bi-Laplacian operator}.
 \end{definition}

\begin{definition}[Random Field]
Let $\bfs \in \Do \subset \Rd$ denote the position vector over a domain $\Do$ which is a bounded subset of  $\Rd$. The boundary $\pDo$ of $\Do$ is assumed to be a piecewise smooth (i.e., infinitely  differentiable) manifold of dimension $d-1$.  Given a probability space $(\Omega, \textsl{F,P})$, where ${\Omega}$ is the sample space, $\textsl{F}$ is a
$\sigma-$field of subspaces of ${\Omega}$, and $\textsl{P}$ is a probability measure, the collection of real-valued, scalar random variables  $ \{ X(\bfs,\omega):  \bfs \in \Do, \, \omega\in\Omega \}$ indexed by $\bfs$ is a scalar, real-valued \emph{spatial random field} $X: \Do \times \Omega \to \R$~\cite{Yaglom87}.
\end{definition}

The realizations of the random field $X(\bfs;\om)$ for $\omega\in\Omega$ are sample functions denoted by  $x(\bfs)$.  The \emph{expectation} over the probability measure will be denoted by $\EE[\cdot]$, e.g.,
$\EE[X(\bfs;\om)] =\int_\Omega x(\bfs; \om) \, P(\mathrm{d}\om) <\infty$.  It will be assumed that both the mean function $m(\bfs) = \EE[X(\bfs;\om)]$ and the covariance function
$C(\bfs_{1}, \bfs_{2})= \EE[X'(\bfs_{1};\om)\, X'(\bfs_{2};\om)]$, where $X'(\bfs;\om)=X(\bfs;\om)- m(\bfs)$ is the fluctuation field,  are well defined.
Without loss of generality, we focus on  $m(\bfs)=0$.

\begin{definition}[Second-order homogeneity]
A random field $X(\bfs;\om)$ is \emph{second-order homogeneous} or \emph{stationary} if (i) $m(\bfs)= c \in \R$ for all $\bfs \in \Do$ and (ii) $C(\bfs_{1},\bfs_{2})= C(\bfs_{1}-\bfs_{2})$ for all $\bfs_{1}, \bfs_{2} \in \Do$.
\end{definition}

\begin{definition}[Fourier transforms]
Let $C(\bfr): \Rd \to \R$ represent a function which is absolutely integrable over $\Rd$. Then, the   \emph{Fourier transform} (FT) and its inverse (IFT) exist. The FT is given by
\begin{equation}
\label{eq:covft}
\widetilde{C}(\bfk) := \ft[ C(\bfr)] = \int_{\Rd} \D\bfr\; \E^{-\I \bfk\cdot \bfr}  C(\bfr),
\end{equation}
where $\I=\sqrt{-1}$ is the imaginary unit and $\bfk \in \Rd$ is the spatial frequency vector (wavevector) in reciprocal (Fourier) space.

The inverse FT is given by means of the integral
\begin{equation}
\label{eq:invcovft}
C(\bfr) := \ift[ \widetilde{C}({\bfk})] = \frac{1}{(2\,\pi)^{d}}\,\int_{\Rd} \D\bfk\;
\E^{\I \,\bfk\cdot \bfr} \, \widetilde{C}(\bfk).
\end{equation}

\end{definition}

\begin{definition}[Positive-definite matrix]
A symmetric $N \times N$ matrix $\bfC$ is  positive-definite (non-negative definite)~\footnote{We use the terms non-negative definite and positive definite as equivalent; if $x^\top \bfC \bfx > 0$ for all $\bfx \in \R^{N}$ then $\bfC$ is strictly positive definite.} if   $x^\top \bfC \bfx \ge 0$ for all $\bfx \in \R^{N}$.
\end{definition}

\begin{definition}[Covariance function]
A function $C(\bfs, \bfs'): \Do \times \Do \to \R$ is a permissible covariance function for some random field $X(\bfs;\om): \Do \times \Omega \to \R$, if and only if  it is a non-negative definite function. A function  $C(\cdot, \cdot): \Do \times \Do \to \R$ is non-negative definite if for all $N \in \Na$ and for all sets of points $\{ \bfs_{i} \}_{i=1}^{N}$, where $\bfs_{i} \in \Do$, the matrix $\bfC$ with entries $C_{i,j}=C(\bfs_{i}, \bfs_{j})$ is non-negative definite.
\end{definition}

\begin{theorem}[Bochner's theorem]
A function $C(\bfr)$ is a permissible covariance for a stationary random field $X(\bfs;\om)$ if and only if the Fourier transform $\widetilde{C}({\bfk})$ of $C(\bfr)$ exists, is non-negative,  and its integral over $\Rd$ is finite~\cite{Bochner59}.
\end{theorem}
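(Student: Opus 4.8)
The plan is to prove the two implications of the equivalence separately, using the preceding definition that a permissible covariance is precisely a non-negative definite function $C$. The sufficiency direction is a direct computation via Fourier inversion, whereas the necessity direction requires an approximation argument, which I expect to be the main obstacle.

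For sufficiency, suppose $\widetilde{C}(\bfk) \ge 0$ for all $\bfk \in \Rd$ and $\int_{\Rd} \widetilde{C}(\bfk)\,\D\bfk < \infty$. I would fix an arbitrary finite configuration $\{\bfs_i\}_{i=1}^{N}$ and an arbitrary real vector $\bfx=(x_1,\dots,x_N)^\top$, substitute the inverse-transform representation \eqref{eq:invcovft} of $C$ into the quadratic form, and interchange the finite sum with the integral (legitimate since $\widetilde{C}$ is integrable). The form then collapses to
\beq
\sum_{i,j=1}^{N} x_i\, x_j\, C(\bfs_i-\bfs_j) = \frac{1}{(2\pi)^{d}}\int_{\Rd}\D\bfk\;\widetilde{C}(\bfk)\,\Big|\sum_{i=1}^{N} x_i\,\E^{\I\bfk\cdot\bfs_i}\Big|^{2},
\eeq
whose integrand is the product of the non-negative weight $\widetilde{C}(\bfk)$ with a squared modulus, hence non-negative. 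Thus the quadratic form is $\ge 0$ for every configuration, which is exactly the non-negative definiteness demanded of a permissible covariance.

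For necessity, suppose $C$ is non-negative definite and absolutely integrable, so that $\widetilde{C}$ exists as a function. The idea is to realize $\widetilde{C}(\bfk)$ as a limit of manifestly non-negative quantities. First I would upgrade the discrete definition to a continuous one: for any bounded, compactly supported $g$, approximating the double integral $\int\!\int C(\bfs-\bfs')\,g(\bfs)\,\overline{g(\bfs')}\,\D\bfs\,\D\bfs'$ by Riemann sums $\sum_{i,j} C(\bfs_i-\bfs_j)\,g(\bfs_i)\,\overline{g(\bfs_j)}$ shows it is $\ge 0$ (the symmetry $C(\bfr)=C(-\bfr)$ extends the real-vector definition to complex coefficients). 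Choosing the windowed exponential $g(\bfs)=\E^{-\I\bfk\cdot\bfs}$ supported on the cube $[-T,T]^{d}$ and changing variables to $\bfr=\bfs-\bfs'$, the overlap volume of the two translated cubes produces the triangular (Fej\'er) weight, giving
\beq
0 \le \frac{1}{(2T)^{d}}\int_{[-2T,2T]^{d}}\D\bfr\;\prod_{i=1}^{d}\Big(1-\frac{\abs{r_i}}{2T}\Big)\,C(\bfr)\,\E^{-\I\bfk\cdot\bfr}.
\eeq
As $T\to\infty$ the weights increase to $1$ and, since $C$ is integrable, dominated convergence sends the right-hand side to $\widetilde{C}(\bfk)$; the inequality survives in the limit, so $\widetilde{C}(\bfk)\ge 0$. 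Finiteness of the integral then follows from inversion at the origin, $\int_{\Rd}\widetilde{C}(\bfk)\,\D\bfk=(2\pi)^{d}\,C(\bfo)$, the right side being $(2\pi)^{d}$ times the finite field variance.

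The most delicate step is the necessity argument: passing from the matrix definition of non-negative definiteness to the non-negativity of the spectral density. The Fej\'er-type weighting is essential, since the naive truncated integral $\int_{[-T,T]^{d}}C(\bfr)\,\E^{-\I\bfk\cdot\bfr}\,\D\bfr$ need not be non-negative, and care is required to justify both the Riemann-sum limit (for which continuity of $C$ should be invoked) and the dominated-convergence passage under the sole hypothesis that $C$ is integrable. The sufficiency direction, by contrast, reduces to a single application of Fubini's theorem.
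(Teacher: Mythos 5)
The paper offers no proof of this statement---it is quoted as a classical result with a citation to Bochner---so there is no in-paper argument to compare against; your proposal is the standard Fej\'{e}r-kernel proof of this $L^1$ version of the theorem, and its core is sound. Sufficiency is indeed a single Fubini interchange collapsing the quadratic form to $\frac{1}{(2\pi)^d}\int_{\Rd}\D\bfk\,\widetilde{C}(\bfk)\,\bigl\lvert\sum_{i} x_i\,\E^{\I\bfk\cdot\bfs_i}\bigr\rvert^2\ge 0$, which is exactly the non-negative definiteness the paper takes as the definition of permissibility; and your necessity argument---Riemann sums to upgrade to continuous test functions (continuity of $C$ must indeed be a standing hypothesis, as you rightly flag; note it is also tacitly used in sufficiency, where pointwise use of \eqref{eq:invcovft} requires $C$ to coincide everywhere with the continuous inverse transform), the windowed exponential, the triangular overlap weight, and dominated convergence---correctly yields $\widetilde{C}(\bfk)\ge 0$. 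One algebraic slip: your display carries both the prefactor $\frac{1}{(2T)^d}$ and the already-normalized weights $\prod_{i}\bigl(1-\abs{r_i}/2T\bigr)$; with both factors present the quantity tends to $0$, not to $\widetilde{C}(\bfk)$, so the prefactor should be dropped (the sign of the inequality is unaffected, but the limit statement needs the correctly normalized expression).

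The one genuine gap is the final step of necessity. You assert that finiteness of $\int_{\Rd}\widetilde{C}(\bfk)\,\D\bfk$ ``follows from inversion at the origin,'' but the inversion identity $\int_{\Rd}\widetilde{C}(\bfk)\,\D\bfk=(2\pi)^d\,C(\bfo)$ is valid only under the hypothesis $\widetilde{C}\in L^1(\Rd)$---precisely what is to be established---so as written the argument is circular. The standard repair combines the non-negativity you have just proven with an approximate identity: for $\epsilon>0$,
\[
\frac{1}{(2\pi)^d}\int_{\Rd}\D\bfk\;\widetilde{C}(\bfk)\,\E^{-\epsilon\kk^2}=\bigl(C\ast g_{\epsilon}\bigr)(\bfo),
\]
where $g_{\epsilon}$ is the Gaussian whose transform is $\E^{-\epsilon\kk^2}$ (the identity is legitimate because $\widetilde{C}$ is bounded, so the mollified product is integrable). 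The right-hand side tends to $C(\bfo)$ as $\epsilon\to 0$ by continuity of $C$, while on the left $\E^{-\epsilon\kk^2}$ increases to $1$, so monotone convergence---valid precisely because $\widetilde{C}\ge 0$---gives $\int_{\Rd}\widetilde{C}(\bfk)\,\D\bfk=(2\pi)^d\,C(\bfo)<\infty$; the Fej\'{e}r weights you already introduced would serve equally well in place of the Gaussian. With this repair, and with continuity of $C$ stated explicitly as a hypothesis, your proof is complete.
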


\section{Gaussian Measure for Boltzmann-Gibbs Random Fields}
\label{sec:gauss-bg}

Boltzmann-Gibbs random fields are defined by means of  exponential ``joint-density'' expressions, i.e., $\ff[x(\bfs)] = Z^{-1}\exp\left(-\Ha[x({\Do})] \right)$, where
$Z$ is a normalization constant, known as the partition function and $x({\Do})= \{ x(\bfs), \bfs \in \Do \}$ denotes a field configuration over the  domain $\Do$.
The calculation of $Z$ is not straightforward, or even mathematically  well-defined for infinite-dimensional configuration spaces (i.e., if the field states are continuous functions).
Nonetheless, this definition of Boltzmann-Gibbs random fields is  used in statistical field theory~\cite{Kardar07,Mussardo10}.  It is possible  to rigorously define   Gaussian measures in such infinite-dimensional spaces as shown in~\cite{Sheffield07}.

\begin{remark}[About the partition function]
\label{rem:Z}
The value of $Z$ is not necessary in order to calculate the relative probabilities of two different states, $x_{1}(\Do)$ and $x_{2}(\Do)$, i.e.,  $\ff[x_{1}(\Do)]/\ff[x_{2}(\Do)] = \E^{-\Ha[x_{1}(\Do)]+\Ha[x_{2}(\Do)]}$. Similarly, the  \emph{characteristic functional (CF)} can be defined without reference to $Z$.  Therefore, it is possible to define a Gaussian measure in terms of the CF without reference to the partition function~\cite{Glimm12}.

\end{remark}

This paper focuses on  B-G random fields with an  energy functional which is quadratic with respect to $x(\bfs)$.  The spatial dependence is introduced by means of a coupling function that involves  a polynomial of the   Laplace operator (see also Remark~\ref{rem:bg-fgc}).
\begin{definition}[Inner product]
\label{defi:inner}
The inner product $\langle f(\bfs), g(\bfs)\rangle$ is defined as follows:
\begin{enumerate}
\item Let $f(\bfs), g(\bfs): \Do \subset \Rd \to \R$ be square-integrable scalar functions, i.e., $f(\cdot), g(\cdot) \in L^{2}(\Do)$. Then
\begin{subequations}
\beq
\label{eq:inner-scalar}
\langle f(\bfs), g(\bfs)\rangle = \int_{\Do}\D\bfs \, f(\bfs)\, g(\bfs).
\eeq
\item Let $\mathbf{f}_{v}(\bfs), \mathbf{g}_{v}(\bfs): \Do  \to \Rd$ be square integrable vector functions with $d$ components, i.e., $\mathbf{f}_{v}(\bfs)= \left( f_{v,1}(\bfs), \ldots,  f_{v,d}(\bfs)\right)^\top$. Then,
\beq
\label{eq:inner-vector}
\langle \mathbf{f}_{v}(\bfs), \mathbf{g}_{v}(\bfs)\rangle = \int_{\Do}\D\bfs \, \mathbf{f}_{v}(\bfs)\cdot \mathbf{g}_{v}(\bfs), \eeq
\end{subequations}
 where the dot product of two vector functions is defined by $\mathbf{f}_{v}(\bfs) \cdot \mathbf{g}_{v}(\bfs)=\sum_{i=1}^{d} f_{v,i}(\bfs)\, g_{v,i}(\bfs)$.
\end{enumerate}
\end{definition}

\medskip

\begin{theorem}[Gaussian measure based on Polynomial of  Laplace operator]
\label{theorem:gaussian}
Let   ${\mathcal Q}(\bmthe)$ represent the following second-degree polynomial of the Laplace operator $\triangle = \nabla^{2}$,
\beq
\label{eq:Q}
{\mathcal Q}(\bmthe) = \theta_{0} - \theta_{1} \triangle + \theta_{2} \triangle^{2},
\eeq
where $\bmthe=(\theta_{0}, \theta_{1}, \theta_{2})^\top$  is a vector of polynomial coefficients  and $\triangle^{2} = (\nabla^2)^2$ is the Bi-Laplacian.
Let us assume that the polynomial coefficients satisfy one of the following conditions:
\begin{enumerate}
    \item[C1:] $\theta_{0}, \theta_{1}, \theta_{2} >0$.

    \item[C2:]  $\theta_{0}, \theta_{2} >0$, $\theta_{1}<0$  and $\theta_{1}^2 < 4 \theta_{0} \theta_{2}$.
\end{enumerate}
Then, the following statements are true:



\begin{enumerate}
    \item[S1:] ${\mathcal C} = {\mathcal Q}^{-1}$ defines a  covariance operator in the Sobolev space $W^{2,2}(\Do)$.

    \item[S2:] The covariance ${\mathcal C}$  defines a unique Gaussian measure with characteristic functional
\beq
\label{eq:gauss-mgf}
S(f) = \E^{-\langle f, \, {\mathcal C} \, f \rangle/2}, \; \mbox{where} \; f \in W^{2,2}(\Do), \; f\mid_{\pDo}=0,
\eeq
and $\langle \cdot\, , \cdot \rangle \, : W^{2,2}(\Do) \times W^{2,2}(\Do) \to \R$ is the $L^{2}(\Do)$ inner product.
\end{enumerate}

\end{theorem}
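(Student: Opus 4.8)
The plan is to diagonalise ${\mathcal Q}(\bmthe)$ through the spectral theory of the Dirichlet Laplacian and then read off every claim from the scalar symbol $p(\lambda)=\theta_2\lambda^2+\theta_1\lambda+\theta_0$. Since $\pDo$ is smooth, $-\triangle$ equipped with the boundary condition $f\mid_{\pDo}=0$ is self-adjoint with compact resolvent; hence it has a discrete spectrum $0<\lambda_1\le\lambda_2\le\cdots\to\infty$ and an orthonormal basis $\{\phi_n\}_{n\ge 1}$ of $L^2(\Do)$ of eigenfunctions, each satisfying $\phi_n\mid_{\pDo}=0$. First I would observe that these eigenfunctions simultaneously diagonalise the higher-order terms, $\triangle\phi_n=-\lambda_n\phi_n$ and $\triangle^2\phi_n=\lambda_n^2\phi_n$, and that $\triangle\phi_n\mid_{\pDo}=-\lambda_n\phi_n\mid_{\pDo}=0$, so the biharmonic (Navier) boundary conditions are automatically consistent with the single condition $f\mid_{\pDo}=0$. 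Consequently ${\mathcal Q}\phi_n=q_n\phi_n$ with $q_n=p(\lambda_n)=\theta_0+\theta_1\lambda_n+\theta_2\lambda_n^2$.

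The next step is to establish strict positivity of the symbol under either hypothesis. Under C1 this is immediate, since $\theta_0,\theta_1,\theta_2>0$ and $\lambda_n>0$ force $q_n>0$. Under C2 I would use the discriminant condition $\theta_1^2<4\theta_0\theta_2$ to conclude that $p$ has no real roots, so, being an upward parabola because $\theta_2>0$, it satisfies $p(\lambda)>0$ for every real $\lambda$, in particular at $\lambda=\lambda_n$. In both cases $q_n>0$ for all $n$, the sequence is bounded below by some $q_{\min}>0$, and $q_n\to\infty$. This lets me define ${\mathcal C}={\mathcal Q}^{-1}$ spectrally by ${\mathcal C}\phi_n=q_n^{-1}\phi_n$; it is bounded on $L^2(\Do)$ with norm $q_{\min}^{-1}$, self-adjoint, and has strictly positive eigenvalues $q_n^{-1}$, hence is a non-negative-definite (covariance) operator, which is S1. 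The fact that ${\mathcal Q}$ is a fourth-order elliptic operator, so that ${\mathcal C}^{1/2}={\mathcal Q}^{-1/2}$ gains two derivatives and maps $L^2(\Do)$ into $W^{2,2}(\Do)$ with the boundary condition, is what singles out $W^{2,2}(\Do)$ as the relevant (Cameron--Martin) space.

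For S2 I would verify the hypotheses of the Bochner--Minlos (Sazonov) theorem for the functional $S(f)=\exp(-\tfrac12\langle f,{\mathcal C}f\rangle)$. Writing $\hat f_n=\langle f,\phi_n\rangle$ gives $\langle f,{\mathcal C}f\rangle=\sum_n q_n^{-1}|\hat f_n|^2\ge 0$, which is finite for every $f\in W^{2,2}(\Do)$ (indeed for every $f\in L^2(\Do)$) because $q_n^{-1}$ is bounded; thus $S(0)=1$ and $S$ is a non-negative quadratic-exponential, the canonical form of a positive-definite characteristic functional. The continuity/nuclearity requirement of the theorem then yields existence, and the standard fact that a characteristic functional determines its measure uniquely yields the uniqueness claim.

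The main obstacle I anticipate is the measure-theoretic existence, namely verifying the continuity (Sazonov/nuclearity) condition and identifying precisely the space on which the measure lives. The eigenvalue asymptotics $q_n^{-1}\sim(\theta_2\lambda_n^2)^{-1}$ combined with Weyl's law $\lambda_n\sim c\,n^{2/d}$ give $q_n^{-1}\sim c'\,n^{-4/d}$, so $\sum_n q_n^{-1}<\infty$---and hence ${\mathcal C}$ is trace-class on $L^2(\Do)$, placing the measure on $L^2(\Do)$---only when $d<4$; for larger $d$ one must realise the measure on a larger negative-Sobolev space, with the test functions $f\in W^{2,2}(\Do)$ furnishing the corresponding dual pairing. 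Handling this dimension dependence cleanly, rather than the algebra of the symbol (which is routine), is where the real care is required.
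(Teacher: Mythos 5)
Your proposal is correct and follows the same skeleton as the paper's proof: diagonalisation of ${\mathcal Q}$ in the Dirichlet eigenbasis of $-\triangle$, strict positivity of the scalar symbol $p(\la)$ under C1 (all coefficients positive) or C2 (upward parabola with negative discriminant, hence no real roots), and existence plus uniqueness of the Gaussian measure read off from the characteristic functional (you via Bochner--Minlos/Sazonov, the paper via the Glimm--Jaffe existence theorem). The genuine difference is in how continuity of ${\mathcal C}$ is handled, and your route is the more elementary and logically tighter one. You get boundedness directly from the uniform spectral lower bound $q_{n}\ge q_{\min}>0$, so $\lVert {\mathcal C}\rVert \le q_{\min}^{-1}$, which is all that continuity of the bilinear form requires; the paper instead proves continuity by establishing convergence of the series $\sum_{n} p_{2}^{-1}(\la_{n})$ via the Maclaurin--Cauchy integral test and Weyl's law---a trace-class condition far stronger than boundedness, and whose reduction to that series (through the bound involving $a_{\max}^{2}$) is loose for general $f$. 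Your Weyl bookkeeping is also the internally consistent one: from $\la_{n}\sim c\,n^{2/d}$ you get $q_{n}^{-1}\sim n^{-4/d}$, hence ${\mathcal C}$ is trace class on $L^{2}(\Do)$ exactly when $d<4$; the paper's integral pairs the eigenvalue density $\rho(\la)=c_{d}\la^{d/2-1}$ with a quartic denominator $\theta_{0}+\theta_{1}\la^{2}+\theta_{2}\la^{4}$, thereby mixing the eigenvalue and wavenumber variables, and arrives at the threshold $d/2<4$. Finally, you make explicit what the paper leaves implicit inside its citation: for $d\ge 4$ the measure cannot be supported on $L^{2}(\Do)$ and must be realised on a larger negative-order Sobolev space, with $f\in W^{2,2}(\Do)$ serving as test functions in the dual pairing. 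In short: same spectral approach, but your separation of boundedness (elementary, from $q_{\min}$) from nuclearity (needed only for the Minlos step, with the correct $d<4$ count) improves on the paper's continuity argument.
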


\begin{proof}

(S1) We consider functions $f(\cdot) \in W^{2,2}(\Do)$ so that all the second-order partial  derivatives are defined in the weak sense.  We need to show that ${\mathcal C}={\mathcal Q}^{-1}$ defines a \emph{positive-definite, continuous, non-degenerate bilinear form}  $\langle f, {\mathcal C}\,g\rangle$ on $W^{2,2}(\Do) \times W^{2,2}(\Do)$.

\medskip
(i) To prove \emph{positive-definiteness} of ${\mathcal C}$ we need to show that $\langle f, {\mathcal C}\,f\rangle \ge 0$ for all $f \in W^{2,2}(\Do)$.
The existence theorem for the eigenfunctions of the Laplace operator states that if $\Do$ is any open bounded subset of $\Rd$ with boundary $\pDo$, then there exists in the Sobolev space $W^{2,2}(\Do)$ an orthonormal  basis $\{ f_{n}\}_{n=1}^{\infty}$ of $L^{2}(\Do)$
 such that $-\nabla^{2} f_{n}=\la_{n} f_{n}$ in $\Do$ with Dirichlet boundary conditions, i.e.,  $f_{n}=0$ on $\pDo$. The eigenvalues $\la_{n}$ form a non-decreasing sequence $0 < \la_{1} \le \la_{2} \le \la_{3} \ldots$ with $\la_{n} >0$ for all $n \ge 1$, and $\la_{n} \to \infty$ as $n \to \infty$; in addition,  the $f_{n}$ are in the  space of smooth functions, $f_n \in C^{\infty}(\Do)$ and $\langle f_{n}\, , f_{m} \rangle = \delta_{n,m}$~\cite[p.~355]{Evans02}, \cite{Nica11}.

 The eigenfunctions of the  (negative) Laplace operator are also eigenfunctions of the biharmonic operator, i.e., $\nabla^{4} f_{n}=(-\nabla^{2})\, (-\nabla^{2}) f_{n} =  -\nabla^{2} \left(\la_{n} f_{n}\right) = \la_{n}^2 f_{n}$. Therefore, the eigenfunction basis of $\nabla^2$ also plays the same role for the operators ${\mathcal Q}$ and ${\mathcal C}$, i.e.,
\[
{\mathcal Q} f_{n} = \left( \theta_{0} + \theta_{1} \la_{n} + \theta_{2} \la_{n}^{2} \right)\, f_{n} \Rightarrow
{\mathcal C} \, f_{n} = \frac{f_{n}}{p_{2}(\la_{n})} \, , \; \forall n=1, 2, \ldots,
\]
where $p_{2}(\la_{n}) = \theta_{0} + \theta_{1} \la_{n} + \theta_{2} \la_{n}^{2} $ is the \emph{characteristic polynomial} of ${\mathcal Q}$.

Let $f \in W^{2,2}(\Do)$. Then,
$f$ can be expanded in the Laplacian orthonormal basis, i.e.,  $f = \sum_{n} a_{n} f_{n}$, where $a_{n} = \langle f_{n}, f \rangle$.  In light of the orthonormality of the eigenfunction basis with respect to the $L^{2}(\Do)$ inner product, it follows that
\beq
\label{eq:C-pd}
\langle f, {\mathcal C}f \rangle = \sum_{n}\sum_{m} \frac{a_{n} a_{m}}{p_{2}(\la_{n}) } \langle f_{m}, f_{n} \rangle = \sum_{n}\frac{a_{n}^{2}}{p_{2}(\la_{n}) } \, .
\eeq

The operator ${\mathcal C}$ is associated with a real-valued, symmetric bilinear functional which is defined in terms of the following $L^{2}(\Do)$  inner product
\[
\langle g, {\mathcal C} f\rangle =
\sum_{n} \frac{a_{n} b_{n}}{p_{2}(\la_{n})}, \; \mbox{where} \;f, g \in W^{2,2}(\Do), \; \mbox{and} \;
g=\sum_{m}b_{m} f_{m}.
\]
This equality  is derived from the orthonormality of the eigenfunctions.  The above definition of $\langle g, {\mathcal C} f\rangle$ guarantees that $\langle (g_{1} + g_{2}), {\mathcal C} f\rangle = \langle g_{1}, {\mathcal C} f\rangle + \langle g_{2}, {\mathcal C} f\rangle$ as well as $\langle g, {\mathcal C} (f_{1} + f_{2})\rangle = \langle g, {\mathcal C} f_{1}\rangle + \langle g, {\mathcal C} f_{2}\rangle$, thus proving the \emph{bilinearity} of the functional $\langle g, {\mathcal C} f\rangle$.

In light of the spectral expansion in Eq.~\eqref{eq:C-pd},  the \emph{positive-definiteness} of ${\mathcal C}$  is guaranteed if the characteristic polynomial $p_{2}(z) = \theta_{0} + \theta_{1} z + \theta_{2} z^2$ of the precision operator ${\mathcal Q}$ takes positive values for all $z>0$ ($z>0$ reflects that  $\lambda_{n}>0$ for the  negative Laplacian's eigenvalues).  This is  true if condition (C1) holds, i.e., $\theta_{0}, \theta_{1}, \theta_{2} >0$.  If $\theta_{1} <0$, the roots of $p_{2}(z)$ should not be real numbers; this  is guaranteed if the discriminant of $p_{2}(z)$ is negative, i.e., $\theta_{1}^{2} - 4\theta_{0}\theta_{2} <0$ as stated in (C2). Negative values of $\theta_{0}$ are not permissible, since $\theta_{0} <0$ leads to negative values of $p_{2}(z)$ for $z= 1/\delta$ where $\delta \to \infty$. Negative values of  $\theta_{2}$ are also not permissible, since they  lead to $p_{2}(z) <0$ for $z \to \infty$.

\medskip

(ii) The \emph{continuity} of the bilinear functional $\langle f\,, {\mathcal C}f\rangle$ requires that the functional is bounded~\cite{Hayden67}, i.e., that there is $\delta >0$ such that
$\langle f\,, {\mathcal C}f\rangle \le \frac{1}{\delta}\langle f\,, f\rangle$.  Without loss of generality assume that $\langle f\,, f\rangle=1$.  Based on the spectral expansion of the functional given by Eq.~\eqref{eq:C-pd}, the existence of a bound requires
\[
\sum_{n}\frac{a_{n}^{2}}{p_{2}(\la_{n}) } \le \frac{1}{\delta}\,.
\]
For this inequality to hold it is sufficient that
\[
\sum_{n}\frac{1}{p_{2}(\la_{n}) } \le \frac{1}{\delta a^{2}_{\max}},\, \mbox{where}\;
a_{\max}= \max\,\{ a_{n}\}_{n=1}^{\infty}.
\]
One idea is to investigate the convergence of the  spectral series  $\sum_{n}p_{2}^{-1}(\la_{n})$ by means of d'Alembert's ratio test, i.e., by
\[
\lim_{n \to \infty} \, \left\lvert \frac{p_{2}(\la_{n})}{p_{2}(\la_{n+1})} \right\rvert= \lim_{n \to \infty} \, \left(\frac{\la_{n}}{\la_{n+1}} \right)^{4}.
\]
However, since   Laplacian eigenvalues may have a multiplicity larger than one,   as $n \to \infty$ for certain $n$ it is possible that 
$\la_{n}/\la_{n+1}=1$, which is inconclusive for convergence.

Instead, we will use the Maclaurin-Cauchy integral test which states that a series $\sum_{n=1}^{\infty} u_{n}$, where $u_{n} \ge 0$, converges if $\int_{1}^{\infty} \phi(x) \, \D x$ is finite and diverges if the integral is infinite; $\phi(x): \R_{0,+} \to \R_{+}$ is a positive, continuous, monotonically decreasing function such that $\phi(n)=u_{n}$ and $\phi(0)$ is finite; herein,  $\phi(x) = 1/p_{2}(\la(x))$, where $x$ is a real-valued extension of the eigenvalue number. The function $\phi(\la(x))$ is positive and continuous if either condition C1 ($\theta_{0}, \theta_{1}, \theta_{2} >0$) or C2 ($\theta_{0}, \theta_{2}
 >0, \theta_{1} <0 \, \wedge \, \theta_{1}^{2} < 4 \theta_{0}\theta_{2}$) holds.

 If C1 holds, $p_{2}(\la(x))$ is monotonically increasing and thus $\phi(\la(x))$ is  monotonically decreasing.  Then,
\[
\int_{1}^{\infty} \phi(x) \, \D x =
\int_{1}^{x^\ast} \phi(x) \, \D x + \int_{x^\ast}^{\infty} \phi(x) \, \D x \,
\]
where $x^\ast \in \R_{+}$ is arbitrarily large but finite. The integral from 1 to ${x^\ast}$ converges since $\phi(x)$ does not have singularities in this interval. If the improper integral from ${x^\ast}$ to $\infty$ converges, then $\int_{1}^{\infty} \phi(x) \, \D x$ also converges. We express the improper integral as
\[
\int_{x^\ast}^{\infty} \phi(x) \, \D x =
\int_{\la^\ast}^{\infty} \D \la\, \rho(\la) \, \frac{1}{\theta_{0} + \theta_{1} \la^{2} + \theta_{2} \la^{4}},
\]
where $\rho(\la) = \D x(\la)/\D \la$ is the density of eigenvalues with value $\la$.  For this integral to converge, $\rho(\la)$ should increase more slowly than $\la^3$ as $\la \to \infty$.
According to \emph{Weyl's law}  the number of Laplacian eigenvalues $n(\la)$ with values less than or equal to $\lambda$ satisfies  asymptotically
\[
\lim_{\la \to \infty}\frac{n(\la)}{\la^{d/2}} = \frac{v_{d}\lvert \Do \rvert }{(2\pi)^d},
\]
where $v_{d}$ is the volume of the unit sphere in $d$ dimensions and $\lvert \Do \vert$ is the volume of the domain $\Do$~\cite{Weyl12}, \cite[p.~356]{Evans02}. Hence, it follows that $\rho(\la)= c_{d} \la^{d/2-1}$ where $c_{d}=\frac{v_{d}\, d\,\lvert \Do \rvert }{2(2\pi)^d}$. Therefore, the improper integral converges if $d/2<4$.

If C2 holds, $\phi(z)$ has  a  maximum at $z^\ast = -\theta_{1}/2\theta_{2}$, while it is monotonically decreasing for $z>z^\ast$. If we denote $\la^\ast=\sqrt{z^\ast}$, the finite series $\sum_{n=1}^{n^\ast}p_{2}^{-1}(\la_n)$, where $n^{\ast}= \min\left( n \in \Na: \la^\ast < \la(n^\ast) \right)$, is summable.  The convergence of the series
$\sum_{n=n^\ast +1}^{\infty}p_{2}^{-1}(\la_n)$ is established using the integral test as shown above.

\medskip
(iii) Based on Eq.~\eqref{eq:C-pd}, $\langle f, {\mathcal C} \, f \rangle=0$  is only possible  if $a_{n}=0$ for all $n$, i.e., only if $f=0$. Hence, the covariance  operator is \emph{non-degenerate}.

\vspace{6pt}

(S2) Since ${\mathcal C}={\mathcal Q}^{-1}$ is a covariance operator (as shown above),  the existence theorem~\cite[p.~100]{Glimm12}  asserts that there exists a unique Gaussian measure  with the characteristic functional given by~\eqref{eq:gauss-mgf}.

\end{proof}

\begin{remark}[BG-LAP2 Gaussian density]
\label{rem:bg-fgc}
Let us consider  square integrable functions  $x(\bfs) \in W^{2,2}(\Do)$.
Based on the CF of  Eq.~\eqref{eq:gauss-mgf}  for the Gaussian measure defined in Theorem~\eqref{theorem:gaussian}, a BG-LAP2 Gaussian joint \emph{probability density function (pdf)} can be defined as $\ff =Z^{-1} \exp\left( -\Ha[x({\Do})] \right)$, with the following energy functional:
\beq
\label{eq:Ha}
\Ha[x({\Do})] = \frac{1}{2}\, \langle x(\bfs), {\mathcal Q} x(\bfs) \rangle,
\eeq
subject to the proviso of the comments in Remark~\ref{rem:Z}. The acronym LAP2 refers to the second-degree polynomial of the  Laplace operator used to construct ${\mathcal Q}$.

Using integration by parts~\cite{Sheffield07,dth20}, the BG-LAP2 quadratic energy functional Eq.~\eqref{eq:Ha} can be expressed as follows
\beq
\label{eq:bg-fgc}
\Ha[x({\Do});\bmthe] = \frac{1}{2}\left[ \theta_{0} \langle x(\bfs), x(\bfs)\rangle  + \theta_{1}  \langle \nabla x(\bfs), \nabla x(\bfs)\rangle +  \theta_{2}  \langle \nabla^{2} x(\bfs), \nabla^{2} x(\bfs)\rangle\right].
\eeq
\end{remark}

\medskip

\section{Boltzmann-Gibbs Random Fields with Kernel-Mediated Interactions}
\label{sec:gauss-bg-kernel}

Kernel smoothing is used in SPH to approximate functions defined in continuum spaces in terms of discrete, irregularly placed  ``pseudo-particles'' which are seamlessly connected  by means of the kernel function. Below we use  \emph{kernel (window) functions} to formulate smoothed expressions for the configurations  in $\Ha[\cdot]$.  This implies a respective smoothing of the spatial coupling, leading to a mesh-free representation of spatial interactions. This section focuses on a finite-dimensional vector $\bfx$ of observations.  In this case it is  unnecessary to restrict $\bfs$ within a bounded domain $\Do$, since the spatial integration over $\Do$ in the energy functional $\Ha[\cdot]$ is replaced by a summation.  The SPH representation introduces a smoothed field $\xk(\bfs)$, where in this case $\bfs \in \Rd$. Hence,  inner products that involve the smoothed fields  represent integrals over $\Rd$.

\begin{definition}[Smoothing Kernels]
\label{defi:kernel}
A real-valued, non-negative function $K(\bfs, \bfs';h): \Rd \times \Rd \to \R_{+,0}$ and $h>0$ is a smoothing kernel function if it satisfies the following properties:
\begin{enumerate}
\item Translation symmetry: $K(\bfs, \bfs';h)=K(\bfs- \bfs';h)$.

\item Symmetry under interchange of positions: $K(\bfs, \bfs';h)=K(\bfs', \bfs;h)$.

\item Existence of all partial derivatives of order $q$ (herein it is sufficient that $q=4$).

\item Normalization condition: $\int_{\Rd} \D\bfs\, K(\bfs - \bfs';h)= 1$.

\item Maximum of Fourier transform  at zero frequency: $\tilde{K}(\bfk;h) \le \tilde{K}(\bfo;h)$ for all $\bfk \in \Rd$, where $\bfo$ is the zero vector in $\Rd$.

\item Isotropic dependence, i.e., $K(\bfr;h)=K(r;h)$ is a \emph{radial function}\footnote{For economy of notation we do not distinguish between the function $K(\cdot;h): \Rd \to \R$ and the radial function $K(\cdot;h): \R \to \R$.} which depends on $\bfr$ exclusively through the Euclidean distance $r=\rr$.
\end{enumerate}

\end{definition}

\begin{remark}[Zero-frequency peak]
Property (5) assumes the existence of the  Fourier transform $\tilde{K}(\bfk;h)$. The latter exists because $K(\bfs-\bfs';h)$ is a non-negative function which is integrable due  to property (4); hence, it is also absolutely integrable.
\end{remark}

\begin{remark}[Kernel Support]
For computational reasons, SPH  typically uses kernel functions with compact support, i.e.,  $K(\bfs, \bfs';h)=0$ if $\| \bfs- \bfs'\| \ge h$ such as the B-spline kernel~\cite{Monaghan92}. However, kernels with infinite support, such as the squared exponential (Gaussian) are also possible.
\end{remark}

\begin{remark}[Isotropy]
Condition (6) is invoked to simplify analytical calculations.  It is also possible to consider kernel functions with  anisotropic distance dependence (geometric anisotropy).  In this case, isotropic dependence can be restored by means of suitable rescaling and rotation transformations, e.g.~\cite{ChoHri08,dth17b}.
\end{remark}

The kernel-based representation used in SPH  and herein has roots in some theoretical results which we review below.

\begin{definition}[Representer of Evaluation]
If $x(\bfs)$ is in
the  Hilbert space $L_{2}$ with the inner product~\eqref{eq:inner-scalar}, the Dirac delta function becomes the representer of evaluation~\cite{Schoelkopf01}:
\beq
\label{eq:dirac-representer}
x(\bfs) = \int_{\Rd} \D{\bfs}' \, \delta(\bfs-\bfs')\, x(\bfs').
\eeq
\end{definition}

If the delta function, which is not in the Hilbert space $L_{2}$ itself, is replaced with a smoother kernel function $K(\cdot, \cdot)$, the field is approximated as
\beq
\label{eq:sph-field}
x(\bfs) \approx  x^{\ast}(\bfs) = \int_{\Rd} \D{\bfs}' \, K(\bfs-\bfs';h)\, x(\bfs'),
\eeq
where  $K(\bfs-\bfs';h)$ is a smooth kernel function and $h>0$ is the \emph{bandwidth parameter} which determines the kernel's range.

\begin{remark}[Zero-bandwidth limit]
At the zero-bandwidth limit $h \to 0$ it holds that $K(\bfs- \bfs';h) \to \delta(\bfs-\bfs')$, where $\delta(\cdot)$ is the \emph{Dirac delta} function, and thus $x^\ast(\bfs) \to x(\bfs)$. However, note that $\delta(\bfs-\bfs')$ is a generalized function and does not admit derivatives in the ordinary sense.
\end{remark}

Equation~\eqref{eq:sph-field} provides an approximate representation of $x(\bfs)$ in terms of a smoothing kernel. However, if $K(\bfs-\bfs';h)$ is a positive-definite kernel, the approximation  becomes exact for functions in the  reproducing kernel Hilbert space (RKHS)~\cite{Smola98} by means of the Moore-Aronszajn theorem~\cite{Aronszajn50,Rasmussen06}.  The latter is not used herein.

%

\vspace{6pt}

\begin{definition}[SPH Representation]
\label{defi:sph-pd}
In the SPH representation, a continuous  function $x(\bfs)$ is typically known in terms of a finite  number of ``particles'' $\{ x_{n} \}_{n=1}^{N}$ that represent observations of $x(\bfs)$.
\begin{enumerate}
    \item Let $\bfx = (x_{1}, \ldots x_{N})^{\top}$ represent a vector of sample values collected at the locations $\Samp = \{ \bfs_{n}\}_{n=1}^{N}$, where $\bfs_{n} \in \Samp$ is the position of the $n$-th particle.

    \item Let $\{ v_{n} \}_{n=1}^{N}$, where $\; v_{n}>0, \; \forall n=1, \ldots, N$ represent a vector of positive weights.

    \item Based on the above, define the SPH representation of $x(\bfs)$ by means of
    \beq
    \label{eq:x-sph}
    \xk(\bfs) = \sum_{n=1}^{N} K(\bfs-\bfs_{n};h) \, x_{n}\, v_{n}, \; \mbox{for all} \; \bfs \in \Rd,
    \eeq
 where $K(\cdot, \cdot; h)$ is a kernel function which satisfies the conditions of Definition~\ref{defi:kernel}.
\end{enumerate}

\end{definition}

\begin{remark}[SPH weights]
The SPH weights are typically set to $v_{n} =m_{n}/\rho_{n}$ where $m_n$ represents the mass and $\rho_n$ the mass density of the particle $n$.  If we assume  that each particle occupies the locus of all points
$\bfs^{\ast}_{n}$ such that $\lVert \bfs_{n} - \bfs^{\ast}_{n} \rVert < \lVert \bfs_{m} - \bfs^{\ast}_{n}\rVert$ for all $m \neq n = 1, \ldots, N$, then $v_n$ is given by  the  hyper-volume of the Voronoi cell which is centered at $\bfs_{n}$ in the $d$\textsuperscript{th}-order Voronoi tesselation of $\Samp$.
\end{remark}

In the SPH representation~\eqref{eq:x-sph} it is straightforward to calculate the partial derivatives of $\xk(\bfs)$. Let $\bmal=(\alpha_{1}, \ldots, \alpha_{d})$ represent a multiindex of order $\lvert \bmal \rvert = \alpha_{1} + \ldots + \alpha_{d}$, where the $\alpha_{i} \in \Na$.   The partial derivatives of order $\lvert \bmal \rvert$ of $\xk(\bfs)$ can be evaluated  without requiring a discretization mesh as follows:
\beq
\label{eq:kernel-deriv}
\frac{\pa^{\lvert \bmal \rvert} \xk(\bfs)}{\pa s_{1}^{\alpha_1} \ldots \pa s_{d}^{\alpha_d}} =
\sum_{n=1}^{N} \frac{\pa^{\lvert \bmal \rvert} K(\bfs-\bfs_{n};h)}{\pa s_{1}^{\alpha_1} \ldots \pa s_{d}^{\alpha_d}}  \, x_{n}\, v_{n},
\eeq
where $K(\bfs-\bfs_{n};h)$ admits partial derivatives of order $\alpha$ at least.
Based on the above,  the \emph{SPH Laplacian and Bi-Laplacian derivatives} of $\xk(\bfs)$ are given by
\beq
\label{eq:kernel-deriv-L-B}
\triangle^{\ell} \xk(\bfs) = \sum_{n=1}^{N} \triangle^{\ell} K(\bfs-\bfs_{n};h) \, x_{n}\, v_{n},\quad \ell=1, 2\, .
\eeq
Below we define an energy functional with  interactions  between latent fields $\xk(\bfs)$, which are obtained by means of the discrete convolutions~\eqref{eq:x-sph} representing smoothed particles.

\medskip

\begin{theorem}[SPH-LAP2 Random Field]
\label{theorem:sph-lap2-disc}
Let $\bfx \in \R^{N}$ represent a vector of $N$ field values at the positions of the sample set $\Samp$. The following SPH-LAP2 quadratic functional is defined:
\beq
\label{eq:Ha-fgc-kernel}
\Ha(\bfx;\bmthe) = \frac{1}{2}\left[ \theta_{0} \langle \xk(\bfs), \xk(\bfs)\rangle  + \theta_{1}  \langle \nabla \xk(\bfs), \nabla \xk(\bfs)\rangle +  \theta_{2}  \langle \nabla^{2} \xk(\bfs), \nabla^{2} \xk(\bfs)\rangle\right],
\eeq
where the SPH representation $\xk(\bfs): \Rd \to \R$ is  introduced in Definition~\ref{defi:sph-pd}, and its partial derivatives up to order $\lvert \bmal \rvert=4$ are defined by means of Eq.~\eqref{eq:kernel-deriv-L-B}.  Let us assume that (i) one of the conditions (C1) and (C2) of Theorem~\ref{theorem:gaussian} hold for the parameter vector $\bmthe$ and   (ii) the absolute value of the Fourier transform $\tilde{K}(\bfk;h)$ of the kernel function $K(\bfr;h)$ decays faster than the algebraic function $\lVert \bfk \rVert^{-(d+4)/2}$ as $\lVert \bfk \rVert \to \infty$. Then, the following statements are true:

\begin{enumerate}\itemsep0.5em

\item[(S1)]
The energy functional defined by  Eq.~\eqref{eq:Ha-fgc-kernel} can be expressed as
\beq
\label{eq:Ha-Q}
\Ha(\bfx;\bmthe) = \frac{1}{2}\, \bfx^\top \, \Prec \, \bfx,
\eeq
where $\Prec$ is a positive definite precision matrix with entries $Q_{n,m}= v_{n}\, Q^{\ast}_{n,m} \, v_{m}$,  $\{ v_{n}\}_{n=1}^{N}$ are the SPH weights, $Q^{\ast}_{n,m}={\Prec}^{\ast}(\bfs_{n}-\bfs_{m};\bmthe,h)$, for all $\bfs_{n}, \bfs_{m} \in \Samp$, and ${\Prec}^{\ast}(\bfr;\bmthe,h)=\ift[\tilde{\Prec}^{\ast}(\bfk;\bmthe,h)]$ is the inverse Fourier transform of the spectral function
\beq
 \label{eq:spectral-function}
  \tilde{\Prec}^{\ast}(\bfk;\bmthe,h) = \lVert \tilde{K}(\bfk;h) \rVert^{2} \, \left( \theta_{0} + \theta_{1}  \lVert \bfk \rVert^{2}   + \theta_{2}  \lVert \bfk \rVert^{4} \right).
 \eeq

\item[(S2)]  The exponential function $\rho: \R^{N} \to \R$ defined by
\beq
\label{eq:pdf}
\rho(\bfx;\bmthe,h) = \frac{\left(\det{\Prec}\right)^{1/2}}{(2\pi)^{N/2}} \,\E^{-\Ha(\bfx;\bmthe)},
\eeq
where $\det{\Prec}$ is the determinant of the matrix $\Prec$, represents a  Gaussian joint probability density function for the vector $\bfx$ with  zero mean and covariance matrix $\bfC= \Prec^{-1}$; the pdf is thus fully determined by the precision matrix $\Prec$.

\end{enumerate}
\end{theorem}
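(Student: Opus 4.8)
The plan is to expand the three inner products in Eq.~\eqref{eq:Ha-fgc-kernel} directly from the SPH representation~\eqref{eq:x-sph} and its derivatives~\eqref{eq:kernel-deriv-L-B}, and to recognize that each collapses to a double sum over particles whose coefficients are Fourier integrals. First I would substitute $\xk(\bfs)=\sum_n K(\bfs-\bfs_n;h)\,x_n v_n$ into $\langle \xk, \xk\rangle$, interchange the finite summations with the $L^2(\Rd)$ integration, and obtain
\[
\langle \xk, \xk\rangle = \sum_{n=1}^N\sum_{m=1}^N x_n v_n\, x_m v_m \int_{\Rd}\D\bfs\, K(\bfs-\bfs_n;h)\,K(\bfs-\bfs_m;h).
\]
The identical manipulation applied to the gradient and Laplacian terms via Eq.~\eqref{eq:kernel-deriv-L-B} produces the analogous double sums with integrands $\nabla K(\bfs-\bfs_n)\cdot\nabla K(\bfs-\bfs_m)$ and $\triangle K(\bfs-\bfs_n)\,\triangle K(\bfs-\bfs_m)$, respectively.

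\textbf{Passing to Fourier space.} Next I would evaluate each particle--particle integral by Parseval's identity. Since $K$ is real and radial, $\tilde K(\bfk;h)$ is real and even, and the transforms of $\nabla K$ and $\triangle K$ are $\I\bfk\,\tilde K$ and $-\kk^2\,\tilde K$. Hence each integral equals the inverse transform of $\lvert\tilde K(\bfk;h)\rvert^2$ weighted by $1$, $\kk^2$, and $\kk^4$, evaluated at $\bfs_n-\bfs_m$. Summing the three contributions with coefficients $\theta_0,\theta_1,\theta_2$ folds the weights into the characteristic polynomial $p_2(\kk^2)=\theta_0+\theta_1\kk^2+\theta_2\kk^4$, which is exactly the spectral function~\eqref{eq:spectral-function}, giving the identification ${\Prec}^{\ast}(\bfr)=\ift[\tilde{\Prec}^{\ast}(\bfk)]$ and establishing Eq.~\eqref{eq:Ha-Q} with $Q_{n,m}=v_n{\Prec}^{\ast}(\bfs_n-\bfs_m)v_m$. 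Convergence of these Fourier integrals, in particular the $\kk^4$-weighted one, is precisely what assumption~(ii) secures: if $\lvert\tilde K(\bfk;h)\rvert$ decays faster than $\kk^{-(d+4)/2}$, then $\lvert\tilde K\rvert^2\kk^4$ decays faster than $\kk^{-d}$ and is integrable at infinity; I would also note that the integration by parts implicit in the gradient term contributes no boundary term because the kernel and its derivatives decay.

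\textbf{Positive definiteness (the expected main obstacle).} The crux is showing $\Prec$ is strictly positive definite. I would recast the quadratic form as a single Fourier integral,
\[
\bfx^\top\Prec\,\bfx=\frac{1}{(2\pi)^d}\int_{\Rd}\D\bfk\;\lvert\tilde K(\bfk;h)\rvert^2\,p_2(\kk^2)\,\Bigl\lvert\textstyle\sum_{n=1}^N x_n v_n\,\E^{\I\bfk\cdot\bfs_n}\Bigr\rvert^2,
\]
obtained by collapsing the double sum into a squared modulus. Non-negativity is immediate, since $\lvert\tilde K\rvert^2\ge0$ and $p_2(\kk^2)>0$ for all $\bfk$ under either (C1) or (C2), the positivity of $p_2$ on $[0,\infty)$ already being established in Theorem~\ref{theorem:gaussian}. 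For \emph{strict} positivity I would argue that the exponential sum $\sum_n x_n v_n\,\E^{\I\bfk\cdot\bfs_n}$, with distinct nodes $\bfs_n$, is an entire function of $\bfk$; were it to vanish on the positive-measure set where $\lvert\tilde K\rvert^2 p_2>0$, it would vanish identically, forcing each $x_n v_n=0$ and hence, since $v_n>0$, $\bfx=\bfo$. The delicate part I expect to require the most care is this strict-positivity argument, together with confirming that $\lvert\tilde K\rvert^2 p_2$ is positive on a set of positive measure, which follows from $\tilde K(\bfo;h)=1$ and continuity.

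\textbf{Plan for (S2).} Once $\Prec$ is symmetric and strictly positive definite, (S2) reduces to the standard characterization of the multivariate normal law. I would observe that $\rho$ in Eq.~\eqref{eq:pdf} is non-negative, diagonalize $\Prec$ by an orthogonal change of variables to evaluate the Gaussian integral and verify that the prefactor $(\det\Prec)^{1/2}/(2\pi)^{N/2}$ normalizes $\rho$ to unit mass, and invoke the evenness $\rho(-\bfx)=\rho(\bfx)$ for the zero mean. Computing the second moments through the same change of variables (or by differentiating the characteristic function) yields $\EE[\bfx\bfx^\top]=\Prec^{-1}=\bfC$, so that $\rho$ is the zero-mean Gaussian density with covariance $\Prec^{-1}$, completing the proof.
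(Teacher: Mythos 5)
Your proposal is correct, and its computational core coincides with the paper's: both expand the three inner products of Eq.~\eqref{eq:Ha-fgc-kernel} into double sums over particle pairs, pass to Fourier space (you via Parseval, the paper via writing each kernel as an inverse transform and integrating out the resulting delta function --- the same calculation), and identify the pair coefficients as the inverse transform of the spectral function~\eqref{eq:spectral-function}, with assumption (ii) securing integrability of the $\kk^{4}$-weighted term; your (S2) is likewise the paper's standard multivariate-normal argument spelled out. The genuine divergence is the positive-definiteness step. The paper, having obtained $Q^{\ast}_{n,m}$ as the spectral integral~\eqref{eq:precision-function}, simply invokes Bochner's theorem: $\tilde{\Prec}^{\ast}\ge 0$ (from C1/C2) plus finiteness of $\int_{\Rd}\D\bfk\,\tilde{\Prec}^{\ast}(\bfk;\bmthe,h)$ makes $\Prec^{\ast}(\bfr)$ a positive-definite radial function, hence every matrix built from it is positive definite in the paper's (non-negative-definite) sense; this is shorter and directly certifies the mesh-free precision function for \emph{any} configuration. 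You instead collapse $\bfx^\top\Prec\,\bfx$ into a single weighted squared modulus and add a real-analyticity argument showing the exponential sum cannot vanish on the positive-measure set where $\lvert\tilde K\rvert^{2}p_{2}>0$ unless $\bfx=\bfo$. That extra step buys \emph{strict} positive definiteness for distinct nodes, which is in fact what (S2) silently needs ($\det\Prec>0$ and the existence of $\Prec^{-1}$); the paper glosses this under its footnoted convention equating positive definite with non-negative definite, so your version actually closes a small logical gap rather than opening one. One shared imprecision, inherited from the theorem statement itself: ``decays faster than $\kk^{-(d+4)/2}$'' does not by itself guarantee integrability of $\lVert\tilde K\rVert^{2}\kk^{4}$ at infinity (a borderline logarithmic decay defeats it), so strictly one should require decay faster by a polynomial margin --- but since the paper argues identically, this is not a defect of your proposal relative to the paper.
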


\begin{proof}
First we prove statement (S1). This step requires determining the precision matrix $\Prec$.
The three inner products in the energy functional of Eq.~\eqref{eq:Ha-fgc-kernel} are expressed as follows
\begin{align}
 \langle \nabla^{\ell}\xk(\bfs),\, \nabla^{\ell}\xk(\bfs)\rangle = &
 (-1)^{\ell}\sum_{n=1}^{N}\sum_{m=1}^{N}
 \int_{\Rd}\D\bfs \, K(\bfs- \bfs_{n};h) \, \nabla^{2\ell} K(\bfs- \bfs_{m};h) v_{n} v_{m} x_{n} \, x_{m},
 \nonumber \\
  & \mbox{for} \; l=0, 1, 2.
\end{align}
The equation for $\ell=0$ follows trivially from the inner product definition~\eqref{eq:inner-scalar} (keeping in mind $\Do$ is replaced with $\Rd$) and the SPH representation~\eqref{eq:x-sph}.
The equation for $\ell=1$ follows from the inner product~\eqref{eq:inner-vector},  application of integration by parts, and the SPH Laplacian given by Eq.~\eqref{eq:kernel-deriv-L-B}. The boundary terms at infinity, resulting from integration by parts, vanish because the kernel function tends to zero for distances much larger than the bandwidth. Finally, the equation for $\ell=2$ is obtained from the scalar inner product~\eqref{eq:inner-scalar} and application of integration by parts twice.

In a more compact form, the inner products can be expressed as:
\begin{subequations}
\label{eq:BG-inner-products}
\beq
 \langle \nabla^{\ell}\xk(\bfs),\, \nabla^{\ell}\xk(\bfs)\rangle =
 \sum_{n=1}^{N}\sum_{m=1}^{N} v_{n}\, v_{m} \, x_{n} \, x_{m} \, \Phi_{n,m;\ell}(h),
 \eeq
 where the terms $\Phi_{n,m;\ell}(h)$ represent the following $d$-dimensional integrals
 \beq
 \Phi_{n,m;\ell}(h) = (-1)^{\ell} \int_{\Rd}\D\bfs \, K(\bfs- \bfs_{n};h) \, \nabla^{2\ell} K(\bfs- \bfs_{m};h).
 \eeq
\end{subequations}

As it follows  from Eqs.~\eqref{eq:Ha-fgc-kernel} and~\eqref{eq:BG-inner-products}, the energy functional is given by the double sum
 \beq
 \label{eq:Ha-sph-lap2-disc}
 \Ha(\bfx;\bmthe,h) = \frac{1}{2}
 \sum_{n=1}^{N}\sum_{m=1}^{N} v_{n}\, v_{m} \, x_{n} \, x_{m} \,\left[ \theta_{0}   \Phi_{n,m;0}(h)+ \theta_{1}   \Phi_{n,m;1}(h) + \theta_{2}   \Phi_{n,m;2}(h)  \right].
 \eeq

Thus, $\Ha(\bfx;\bmthe,h)$ is expressed in the form of Eq.~\eqref{eq:Ha-Q}, with $\Prec$  defined as
 \beq
 \label{eq:Qnm}
 {Q}_{n,m} \doteq [\Prec]_{n,m} = v_{n} \,\left[ \theta_{0}   \Phi_{n,m;0}(h)+ \theta_{1}   \Phi_{n,m;1}(h) + \theta_{2}   \Phi_{n,m;2}(h)  \right]\, v_{m}.
 \eeq

 \medskip

Furthermore, let us define the  matrix ${\Prec}^\ast$ obtained from $\Prec$ if the SPH weights $\{ v_{n} \}$ are  set equal to unity. Then, from Eq.~\eqref{eq:Qnm} it follows that
\beq
 \label{eq:Q-tilde}
 {\Prec}^\ast = \theta_{0}   {\bm \Phi}_{0}(h)+ \theta_{1}   {\bm \Phi}_{1}(h) + \theta_{2}   {\bm \Phi}_{2}(h),
 \eeq
 where $[{\bm \Phi}_{\ell}(h)]_{n,m}=\Phi_{n,m;\ell}(h)$, $\ell=0,1,2$. \emph{If ${\Prec}^\ast $ is positive-definite then so is $\Prec$}. This follows from the fact that if ${\Prec}^\ast $ is positive definite, then $\bfx^\top {\Prec}^{\ast} \bfx \ge 0$ for all vectors $\bfx \in \R^{N}$.  Therefore, the inequality also holds if $\bfx$ is replaced with $\bfx'$ such that $x'_{n} = v_{n} x_{n}$, $\forall \, n =1, \ldots, N$.

Expressing the kernel functions in terms of the respective IFTs, i.e., Eq.~\eqref{eq:invcovft}, the $\Phi_{n,m;\ell}(h)$  are given by the following spectral integrals~\footnote{In the proof, to abbreviate the notation, we drop the dependence of $\tilde{K}(\bfk;h)$ on $h$.}

 \begin{align}
 \label{eq:Phi-nm-ell}
\Phi_{n,m;\ell}(h) = & \frac{(-1)^{\ell}}{(2\pi)^{2d}}   \int_{\Rd}\D\bfs \, \int_{\Rd}\D\bfk_{1}\, \E^{\jmath \bfk_{1}\cdot (\bfs - \bfs_{n})} \tilde{K}(\bfk_{1})\,
\int_{\Rd}\D\bfk_{2}\, \nabla^{2\ell} \,\E^{\jmath \cdot \bfk_{2}(\bfs - \bfs_{m})} \tilde{K}(\bfk_{2})
\nonumber \\
 = & \frac{1}{(2\pi)^{2d}} \int_{\Rd}\D\bfk_{1}\,\int_{\Rd}\D\bfk_{2}\, \left[ \int_{\Rd}\D\bfs \,  \E^{\jmath (\bfk_{1}+ \bfk_{2}) \cdot \bfs} \right] \,  \E^{-\jmath (\bfk_{1}\cdot \bfs_{n} + \bfk_{2} \cdot \bfs_{m}) }\, \tilde{K}(\bfk_{1})\, \lVert \bfk_{2} \rVert^{2\ell} \tilde{K}(\bfk_{2})
 \nonumber \\
 = & \frac{(2\pi)^{d}}{(2\pi)^{2d}}\int_{\Rd}\D\bfk_{1}\,\int_{\Rd}\D\bfk_{2}\, \delta(\bfk_{1} + \bfk_{2})\,  \E^{-\jmath (\bfk_{1}\cdot \bfs_{n} + \bfk_{2} \cdot \bfs_{m}) } \, \tilde{K}(\bfk_{1})\, \lVert \bfk_{2} \rVert^{2\ell}  \tilde{K}(\bfk_{2})
 \nonumber \\
 = & \frac{1}{(2\pi)^{d}}\int_{\Rd}\D\bfk_{1}\,\,  \E^{-\jmath \bfk_{1}\cdot ( \bfs_{n} - \bfs_{m}) } \, \lVert \bfk_{1} \rVert^{2\ell}  \tilde{K}(\bfk_{1})\, \tilde{K}(-\bfk_{1})
 \nonumber \\
 = & \frac{1}{(2\pi)^{d}}\int_{\Rd}\D\bfk\,\,  \E^{\jmath \bfk\cdot ( \bfs_{n} - \bfs_{m}) } \, \lVert \bfk \rVert^{2\ell}  \, \lVert \tilde{K}(\bfk) \rVert^{2} \, .
 \end{align}

The second line in Eq.~\eqref{eq:Phi-nm-ell} is obtained using  ${\mathrm {FT}}[\nabla^{2}]= -\lVert \bfk \rVert^2$, i.e., that the image of the Laplace operator in Fourier space is $-\lVert \bfk \rVert^2$~\cite{Trefethen05}.  The third line follows from the orthonormality of the Fourier basis, i.e., $\int_{\Rd} \D\bfs \, \exp(\jmath \bfk \cdot \bfs) = (2\pi)^{d}\, \delta(\bfk)$. The fourth line is obtained by integrating the delta function, i.e., $\int_{\Rd} \D \bfk_{2} \,\delta(\bfk_1 + \bfk_2) \, \phi(\bfk_{2})= \phi(-\bfk_{1})$. Finally, the fifth line is obtained by renaming $\bfk_{1} \to -\bfk$ and using the conjugation property for the Fourier transform of real-valued functions, i.e.,  $\tilde{K}(-\bfk)= \tilde{K}^{\dag}(\bfk)$, where the symbol $\dagger$ denotes the complex conjugate.

 Based on Eqs.~\eqref{eq:Q-tilde} and~\eqref{eq:Phi-nm-ell} the entries of the matrix ${\Prec}^{\ast}$ are given by the following spectral integral
 \beq
 \label{eq:precision-function}
 {Q}^{\ast}_{n,m} = \frac{1}{(2\pi)^{d}}\int_{\Rd}\D\bfk\,\,  \E^{\jmath \bfk\cdot ( \bfs_{n} - \bfs_{m}) } \,  \lVert \tilde{K}(\bfk) \rVert^{2} \, \left[ \theta_{0} + \theta_{1}  \lVert \bfk \rVert^{2}
  + \theta_{2}  \lVert \bfk \rVert^{4} \right].
 \eeq
 Therefore, ${\Prec}^\ast $ is given by the inverse Fourier transform of the \emph{SPH-LAP2 spectral function} of Eq.~\eqref{eq:spectral-function}.

 According to Bochner's theorem, the $N \times N$ matrix $\Prec^\ast$ is positive definite for any $N \in \Na$ if and only if (A1) $\tilde{\Prec}^{\ast}(\bfk;\bmthe,h) \ge 0$ and (A2) the improper integral
 $\int_{\Rd} \D\bfk \,\tilde{\Prec}^{\ast}(\bfk;\bmthe,h)$ is finite. The condition (A1) is satisfied by (C1) and (C2) of Theorem~\ref{theorem:gaussian}. The condition (A2) holds iff
 $\lim_{\kk \to \infty} \lVert \tilde{K}(\bfk) \rVert \, \lVert \bfk \rVert^{d/2 +2} = 0$ so that the improper integral exists as $\lVert \bfk \rVert \to \infty$.  This concludes the proof of Statement (S1).

 \medskip
 Statement (S2) follows straightforwardly  from Eq.~\eqref{eq:Ha-Q} given the positive-definiteness of the precision matrix $\Prec$ and the definition of the multivariate normal distribution with covariance matrix $\bfC = \Prec^{-1}$, e.g.~\cite{Rasmussen06}.
\end{proof}

\medskip

\begin{remark}[Spatial coupling  function]
If  we replace $\bfs_{n} - \bfs_{m}$ with $\bfr$ in Eq.~\eqref{eq:precision-function} we obtain a \emph{radial, positive-definite function} $\Prec^{\ast}(r;\bmthe,h)$ which can be used to construct a mesh-free precision matrix for any sampling point configuration $\Samp$.  To our knowledge, this is the first such construction.
\end{remark}

\begin{remark}[On the homogeneity of the precision matrix]
If $v_{n}=1$ for all $n=1, \ldots, N$ then $\Prec= \Prec^{\ast}$, and $\Prec$ is a position-independent precision matrix which depends only on the spatial lag.
On the other hand, if there exists at least one $j \in \{1, 2, \ldots, N\}$ such that $v_{j} \neq 1$, then $[\Prec]_{n,m}= v_{n} \, v_{m}\,[\Prec^{\ast}]_{n,m}$, and $\Prec$ is a spatially non-homogeneous precision matrix. In the following, we focus on the precision matrix $\Prec^\ast$ assuming that $v_{n}=1$ for all $n=1, \ldots, N$.
\end{remark}

\begin{remark}[On the connection with Gauss-Markov random fields] Let us consider that $v_{n}=1$ for all $n=1, \ldots, N$. This implies an SPH-LAP2 random field with a homogeneous precision matrix.
The exponential joint probability density of Eq.~\eqref{eq:pdf} has the same formal expression as the pdf of a Gauss-Markov random field (GMRF) with precision matrix $\Prec$; e.g., see Eq.~(2.4) in \cite{Rue05}.
According to Rozanov a stationary field is a Markov random field if and only if the spectral density is a reciprocal of a polynomial function~\cite{Rozanov77,Rozanov82}.  This  implies a polynomial generalized Fourier transform of the inverse covariance. This condition is clearly not satisfied by the SPH-LAP2 precision spectral  function~\eqref{eq:spectral-function} unless
$\lVert \tilde{K}(\bfk;h) \rVert = 1$ which can only be realized for $h=0$. The SPH-LAP2 random field fails to satisfy the Rozanov criterion because it is non-stationary for $h \neq 0$ (see Remark~\ref{rem:nonsta}).
\end{remark}

\subsection{The SPH-LAP2 Precision Functions}
\label{ssec:sph-lap2-prec}

In this section we derive a general expression for the SPH-LAP2 spatial coupling (precision) function which is valid for all smoothing kernels that satisfy the conditions of Definition~\ref{defi:kernel}. In order to accomplish this goal we need to  evaluate the Laplacian and Bi-Laplacian of radial functions.

If $C(r): \R \to \R$   where $C(r) = C_{d}(\rr): \Rd \to \R$ is a radial function which admits at least second-order continuous partial derivatives with respect to $r= \rr$, its Laplacian  is given by~\cite[p.~16]{Yadrenko83},\cite[p.~446]{dth20}
\begin{align}
\label{eq:laplacian-radial}
\triangle C(r) = & \frac{1}{r^{d-1}}\, \frac{\D}{\D r}\left[  r^{d-1}\frac{\D C(r)}{\D r}\right] = \frac{\D^{2}C(r)}{\D r^{2}} + \frac{(d-1)}{r} \frac{\D C(r)}{\D r}.
\end{align}

\medskip

\begin{lemma}[Bi-Laplacian of Radial function]
\label{lemma:bilaplacian}
Let $C(r): \R \to \R$, where $r=\rr$ be a radial function  that admits at least fourth-order continuous derivatives with respect to $r$.  The Bi-Laplacian of $C(r)$ is given by
\begin{align}
\label{eq:bilaplacian-radial}
\triangle^{2} C(r)  =  & \frac{\D^{4}C(r)}{\D r^{4}} + \frac{2(d-1)}{r}\frac{\D^{3}C(r)}{\D r^{3}}
 + \left[ \frac{(d-1)^2 - 2(d-1) }{ r^{2}} \right]\, \left[ \frac{\D^{2} C(r)}{\D r^{2}} - \frac{1}{r} \frac{\D C(r)}{\D r} \right].
\end{align}
\end{lemma}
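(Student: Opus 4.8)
The plan is to obtain $\triangle^{2} C$ by applying the radial Laplacian formula~\eqref{eq:laplacian-radial} twice. The crucial observation is that the Laplacian of a radial function is itself radial: if $C$ depends on position only through $r=\rr$, then so does $\triangle C$. Consequently $\triangle C(r)$ may be treated as a new radial function to which the same one-dimensional operator is applied a second time. The hypothesis that $C$ admits at least fourth-order continuous derivatives in $r$ is precisely what licenses both applications and ensures that $\triangle^{2} C$ is continuous for $r>0$.

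First I would set $g(r) \doteq \triangle C(r) = \dfrac{\D^{2}C(r)}{\D r^{2}} + \dfrac{d-1}{r}\dfrac{\D C(r)}{\D r}$ via~\eqref{eq:laplacian-radial}. Writing $k=d-1$ for brevity, the next step is to differentiate $g$ twice, carefully tracking through the product rule the terms produced by the $k/r$ coefficient; this yields $\D g/\D r$ and $\D^{2} g/\D r^{2}$ as linear combinations of $C',C'',C''',C''''$ with coefficients rational in $r$ (built from $1$, $k/r$, $k/r^{2}$, $k/r^{3}$). Applying~\eqref{eq:laplacian-radial} once more then gives
\beq
\triangle^{2} C = \triangle g = \frac{\D^{2} g}{\D r^{2}} + \frac{k}{r}\frac{\D g}{\D r}.
\eeq

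Finally I would substitute these expressions and collect terms by order of derivative of $C$. The $C''''$ and $C'''$ coefficients combine to $1$ and $2k/r = 2(d-1)/r$, matching the first two terms of~\eqref{eq:bilaplacian-radial}. The $C''$ coefficient reduces to $(k^{2}-2k)/r^{2} = [(d-1)^{2}-2(d-1)]/r^{2}$, while the $C'$ coefficient reduces to $-(k^{2}-2k)/r^{3}$; these two share the common prefactor $[(d-1)^{2}-2(d-1)]/r^{2}$ and recombine into the bracketed form $\left[\D^{2} C/\D r^{2} - (1/r)\,\D C/\D r\right]$ of~\eqref{eq:bilaplacian-radial}. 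There is no genuine obstacle beyond careful bookkeeping: the only points requiring attention are the product-rule contributions from the $k/r$ factor and the algebraic recombination of the $1/r^{2}$ and $1/r^{3}$ terms into a single prefactor; the singular behaviour at $r=0$ is immaterial, since the identity is asserted for $r>0$.
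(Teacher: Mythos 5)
Your proof is correct and follows essentially the same route as the paper: both apply the radial Laplacian formula~\eqref{eq:laplacian-radial} a second time to $\triangle C(r)$ (using that it is again radial) and then collect the coefficients of $C', C'', C''', C''''$. The only cosmetic difference is that the paper splits $\triangle C = c_{1} + c_{2}$ with $c_{1}=C''$ and $c_{2}=\frac{d-1}{r}C'$ and Laplacian-izes each piece by linearity, whereas you differentiate the full $g=\triangle C$ at once; the resulting bookkeeping and final recombination of the $1/r^{2}$ and $1/r^{3}$ terms are identical.
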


\medskip
\begin{proof}
It holds that $\triangle^{2} C(r) = \triangle [\triangle C(r)]$, where $\triangle C(r)$ is given by Eq.~\eqref{eq:laplacian-radial}. Let $\triangle C(r)=c_{1}(r) + c_{2}(r)$, where $c_{1}(r)=\D^{2}C(r) / \D r^{2}$  and $c_{2}(r)= \frac{d-1}{r} \D C(r) / \D r$. Then $\triangle^{2} C(r)=\triangle c_{1}(r) + \triangle c_{2}(r)$. We can calculate $\triangle c_{i}(r)$, where $i=1, 2$ by application of Eq.~\eqref{eq:laplacian-radial}.
For $i=1$ we obtain
\beq
\label{eq:bil-c1}
\triangle c_{1}(r) = \frac{\D^{2} c_{1}(r)}{\D r^{2}} + \frac{(d-1)}{r} \frac{\D c_{1}(r)}{\D r} =
\frac{\D^{4} C(r)}{\D r^{4}} + \frac{(d-1)}{r} \frac{\D^{3} C(r)}{\D r^3}.
\eeq
Similarly, for $i=2$ we obtain
\[
\triangle c_{2}(r) = \frac{\D^{2}}{\D r^{2}}  \left[\left(\frac{d-1}{r} \right) \frac{\D C(r) }{\D r}\right]  + \frac{(d-1)}{r} \frac{\D}{\D r} \left[\left(\frac{d-1}{r} \right)\frac{\D C(r) }{\D r}. \right]
\]
Let us define $g_{1}(r) \doteq \frac{d-1}{r}$ and $g_{2}(r) \doteq \frac{\D C(r)}{\D r}$. Then,  $\triangle c_{2}(r)$ is given by
\[
\triangle c_{2}(r) = \frac{\D^{2}}{\D r^{2}} \left[ g_{1}(r) g_{2}(r) \right] + \left( \frac{d-1}{r} \right)\frac{\D}{\D r} \left[ g_{1}(r) g_{2}(r) \right].
\]
The first and second order derivatives of the product $g_{1}(r) g_{2}(r)$ can be evaluated using differentiation by parts leading to
\[
\frac{\D }{\D r}\left[ g_{1}(r) g_{2}(r) \right]= g_{1}(r)\frac{\D  g_{2}(r)}{\D r}+ g_{2}(r)\frac{\D  g_{1}(r)}{\D r},
\]
\[
\frac{\D^{2} }{\D r^{2}}\left[ g_{1}(r) g_{2}(r) \right]= g_{1}(r)\frac{\D^{2}  g_{2}(r)}{\D r^{2}}+ g_{2}(r)\frac{\D^{2}  g_{1}(r)}{\D r^{2}} + 2 \frac{\D  g_{1}(r)}{\D r}\frac{\D  g_{2}(r)}{\D r}.
\]
Based on the above, we obtain the following equation for $\triangle c_{2}(r)$
\begin{align}
\label{eq:bil-c2}
 \triangle c_{2}(r) = & \frac{2(d-1)}{r^{3}}\frac{\D C(r)}{\D r} + \frac{(d-1)}{r}\frac{\D^3 C(r)}{\D r^3} - \frac{2(d-1)}{r^{2}}\frac{\D^2 C(r)}{\D r^2}
 \nonumber \\
& + \frac{(d-1)}{r}\left[ \frac{(d-1)}{r} \frac{\D^2 C(r)}{\D r^2} - \frac{(d-1)}{r^2}\frac{\D C(r)}{\D r}\right].
\end{align}
Equation~\eqref{eq:bilaplacian-radial} follows by adding the respective terms on each side of  Eqs.~\eqref{eq:bil-c1}-\eqref{eq:bil-c2}.
\end{proof}

\medskip

\begin{definition}[Spatial interaction]
Let  $\bfx$ represent a vector of sample values measured at the positions in the set $\Samp$, distributed according to the Boltzmann-Gibbs joint density $\ff \propto \E^{-\Ha[\bfx]}$ where $\Ha[\bfx]$ is the energy functional of Eq.~\eqref{eq:Ha-Q}.  (i) Two points $\bfs_{n}, \bfs_{m} \in \Samp$ are  considered to interact in the strict sense if and only if $Q_{n,m} \neq 0$. (ii) If on the other hand $Q_{n,m} =0$, the values of the field at $\bfs_{n}$ and $\bfs_{m}$ are conditionally independent. (iii)
Two points are considered to  interact  in the weak sense if and only if
$\lvert Q_{n,m} \rvert < \epsilon$, where $0< \epsilon \ll 1$ is a small, user-defined threshold.
\end{definition}

The strong definition of spatial interaction is useful if the smoothing kernel $K(\cdot,\cdot;h)$ is compactly supported. Then, sparse precision matrices $\Prec$ can be obtained.  For domains whose length along orthogonal directions is approximately uniform,  sparsity requires that $h \ll L$ where $L$ is a characteristic domain size. If the domain length varies along orthogonal directions, sparsity requires that  $h \ll \overline{L}$, where $\overline{L}$ is an ``average'' domain size.  The bandwidth $h$ is a parameter to be determined from the data. Experience with kernel methods shows that sparse precision matrices are typically obtained if the data do not have long-range correlations~\cite{dth15,dth19}.
Infinitely extended smoothing kernels (e.g., exponential, squared exponential)  lead to dense precision matrices $\Prec$.  The latter can be approximated by a sparse  $\Prec'$ if   weak interactions (with respect to an arbitrary small threshold $\epsilon$) are truncated.  Conditional independence can be understood in the SPH framework as follows: the interactions are defined by the latent field $\xk(\bfs)$ via the Laplacian and Bi-Laplacian operators.  The latent fields are thus coupled locally via the differential operators (or their discretized approximations). At the same time, if the smoothing kernel is compactly supported, the latent fields couple only points in their vicinity. Thus, the  interactions in the SPH representation are restricted only over a maximum range $r_{\max}$ (determined by the bandwidth, the LAP2 operator, and the kernel function), while $Q_{n,m}$ vanishes if  $\lVert \bfs_{n} - \bfs_{m} \rVert > r_{\max}$.

\begin{definition}[Spatial interaction smoothing function]
\label{defi:K2}
The  function $K^{(2)}(r;h)$ is defined as the inverse Fourier transform of the squared modulus of the kernel function $K(\bfr;h)$, i.e.,  $ K^{(2)}(r;h) =\ift\left[ \lVert \tilde{K}(\bfk;h) \rVert^{2}\right]$, where $r=\rr$.
\end{definition}

As will be shown below, the function $K^{(2)}(r;h)$  and its derivatives determine the  dependence of interactions on the  distance between  points. Therefore $K^{(2)}(r;h)$ is a main ingredient of the SPH-LAP2 precision function.  Since the spatial  interactions, implemented via Eq.~\eqref{eq:precision-function} are well-defined for all pairs of points, the spatial coupling function based on $K^{(2)}(r;h)$  bypasses the need to define a specific mesh (spatial grid).

\medskip

\begin{theorem}[SPH-LAP2 Precision Function]
\label{theorem:sph-lap2-prec}
Let the SPH-LAP2 precision function be defined by means of the $d$-dimensional inverse Fourier transform
\beq
\label{eq:precision-function-ift}
Q^{\ast}(\bfr;h) = \ift\left[\,  \lVert \tilde{K}(\bfk;h) \rVert^{2} \, \left( \theta_{0} + \theta_{1}  \lVert \bfk \rVert^{2}   + \theta_{2}  \lVert \bfk \rVert^{4} \right) \,\right],
\eeq
where the parameter vector $\bmthe$ satisfies the conditions of Theorem~\ref{theorem:gaussian}, and $\tilde{K}(\bfk)$ is the FT of a radial kernel function $K(r)$ (see Definition~\ref{defi:kernel}). Then, $Q^{\ast}(\bfr;\bmthe,h)$ is  given by
\beq
\label{eq:precision-sph}
Q^{\ast}(r;h) =  \theta_{0}\,K^{(2)}(r;h) - \theta_{1} \, \triangle K^{(2)}(r;h)
+ \theta_{2} \triangle^2 K^{(2)}(r;h),
\eeq
where $K^{(2)}(r;h)$ is the spatial interaction smoothing function  of  Definition~\ref{defi:K2}. More specifically, it holds that
\begin{align}
\label{eq:precision-sph-2}
 Q^{\ast}(r;h) = & \theta_{0}\,K^{(2)}(r;h) - \theta_{1} \,\left[
 \frac{\D^{2}K^{(2)}(r;h)}{\D r^{2}} + \frac{(d-1)}{r} \frac{\D K^{(2)}(r;h)}{\D r} \right]
 \nonumber \\
 & \quad \quad + \theta_{2} \left\{ \frac{\D^{4}K^{(2)}(r;h)}{\D r^{4}} +
  \frac{2(d-1)}{r}\frac{\D^{3}K^{(2)}(r;h)}{\D r^{3}}  \right.
 \nonumber \\
 &  \left.  \quad \quad \quad
  + \left[ \frac{(d-1)^2 - 2(d-1) }{ r^{2}} \right]\, \left[ \frac{\D^{2} K^{(2)}(r;h)}{\D r^{2}} - \frac{1}{r} \frac{\D K^{(2)}(r;h)}{\D r} \right]\right\} \,.
\end{align}
\end{theorem}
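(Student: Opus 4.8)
The plan is to reduce the claim to the elementary fact, already used in the proof of Theorem~\ref{theorem:sph-lap2-disc}, that the Laplace operator acts as multiplication by $-\lVert\bfk\rVert^{2}$ in reciprocal space. First I would recall from Definition~\ref{defi:K2} that $K^{(2)}(r;h)=\ift[\lVert\tilde K(\bfk;h)\rVert^{2}]$, so that the spectral factor $\lVert\tilde K(\bfk;h)\rVert^{2}$ appearing in Eq.~\eqref{eq:precision-function-ift} is precisely the Fourier transform of $K^{(2)}$. By linearity of the inverse transform, $Q^{\ast}(\bfr;h)$ splits into three pieces corresponding to the terms $\theta_{0}$, $\theta_{1}\lVert\bfk\rVert^{2}$, and $\theta_{2}\lVert\bfk\rVert^{4}$ multiplying $\lVert\tilde K\rVert^{2}$.

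Next I would establish the Fourier-multiplier identities that convert these factors into differential operators acting on $K^{(2)}$. Since $\ft[\triangle]=-\lVert\bfk\rVert^{2}$, one has $\ift[\lVert\bfk\rVert^{2}\,\lVert\tilde K\rVert^{2}]=-\triangle K^{(2)}$, and applying the same correspondence twice yields $\ift[\lVert\bfk\rVert^{4}\,\lVert\tilde K\rVert^{2}]=\triangle^{2}K^{(2)}$. Summing the three terms with coefficients $\theta_{0}$, $-\theta_{1}$, and $\theta_{2}$ gives exactly Eq.~\eqref{eq:precision-sph}. The derivation amounts to moving $\triangle$ and $\triangle^{2}$ outside the inverse-transform integral, i.e. differentiating under the integral sign; the justification rests on the integrability/decay condition on $\tilde K$ needed for the inverse transform in Eq.~\eqref{eq:precision-function-ift} to be well defined (cf. the hypothesis of Theorem~\ref{theorem:sph-lap2-disc}), which guarantees that $\lVert\bfk\rVert^{4}\lVert\tilde K(\bfk;h)\rVert^{2}$ is absolutely integrable over $\Rd$. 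This integrability makes $K^{(2)}$ four-times continuously differentiable and legitimizes passing each factor $\lVert\bfk\rVert^{2}$ through the integral as $-\triangle$.

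Finally, to obtain the explicit form Eq.~\eqref{eq:precision-sph-2}, I would use that $\lVert\tilde K(\bfk;h)\rVert^{2}$ is a radial function of $\bfk$ (because $K$, and hence $\tilde K$, is radial by Definition~\ref{defi:kernel}), so that its inverse transform $K^{(2)}(r;h)$ is itself radial in $r=\rr$. I would then substitute $C(r)=K^{(2)}(r;h)$ into the radial Laplacian formula Eq.~\eqref{eq:laplacian-radial} to expand $\triangle K^{(2)}$, and into the radial Bi-Laplacian formula Eq.~\eqref{eq:bilaplacian-radial} of Lemma~\ref{lemma:bilaplacian} to expand $\triangle^{2}K^{(2)}$. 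Collecting the resulting one-dimensional derivatives reproduces Eq.~\eqref{eq:precision-sph-2}.

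I expect the only genuine obstacle to lie in the analytic justification of the middle step, namely confirming that $K^{(2)}$ has enough smoothness (at least $C^{4}$) for the radial differential-operator formulas to apply and for the multiplier-to-derivative exchange to be valid. This is exactly where the assumed asymptotic decay of $\tilde K$ is used; the remaining steps are linearity of the Fourier transform together with direct substitution into the previously proved radial identities.
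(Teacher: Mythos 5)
Your proposal is correct and follows essentially the same route as the paper's own proof: linearity splits $Q^{\ast}$ into three spectral pieces, the multiplier identities $\ift[\lVert\bfk\rVert^{2}\lVert\tilde K\rVert^{2}]=-\triangle K^{(2)}$ and $\ift[\lVert\bfk\rVert^{4}\lVert\tilde K\rVert^{2}]=\triangle^{2}K^{(2)}$ give Eq.~\eqref{eq:precision-sph}, and radiality together with Eq.~\eqref{eq:laplacian-radial} and Lemma~\ref{lemma:bilaplacian} yields Eq.~\eqref{eq:precision-sph-2}. Your added justification---that the decay hypothesis on $\tilde K$ makes $\lVert\bfk\rVert^{4}\lVert\tilde K\rVert^{2}$ absolutely integrable, hence $K^{(2)}\in C^{4}$ and the differentiation under the integral sign is legitimate---is a correct refinement of a step the paper asserts without comment, not a different approach.
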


\begin{proof}
Since $K(r;h)$ is a radial function, the FT $\tilde{K}(\bfk;h)$ depends purely on the Euclidean norm of $\bfk$. Hence, the inverse Fourier transform of $\lVert \tilde{K}(\bfk;h) \rVert^{2}$ is also a radial function which depends on the Euclidean norm $r=\rr$ of the lag vector $\bfr$.

Since the image of the Laplace operator  in the Fourier domain is $\ft[\triangle]=-\kk^2$, the  inverse Fourier transforms of $\kk^{2n} \, \lVert \tilde{K}(\bfk;h) \rVert^{2}$, $n=1, 2$, are given by
\begin{align*}
\ift[\,\kk^{2} \, \lVert \tilde{K}(\bfk;h) \rVert^{2}\,]  & = - \triangle K^{(2)}(r;h),
\\[1.5ex]
\ift[\,\kk^{4} \, \lVert \tilde{K}(\bfk;h) \rVert^{2}] & =  \triangle^2 K^{(2)}(r;h) \,.
\end{align*}
Equation~\eqref{eq:precision-sph} follows from the above and the IFT given by  Eq.~\eqref{eq:precision-function-ift}. Finally, Equation~\eqref{eq:precision-sph-2} follows from Eq.~\eqref{eq:precision-sph}, the Laplacian, Eq.~\eqref{eq:laplacian-radial}, and the Bi-Laplacian of a radial function,  Eq.~\eqref{eq:bilaplacian-radial}
of Lemma~\ref{lemma:bilaplacian}.
\end{proof}

\subsection{The Case of Gaussian Smoothing Kernel}
\label{ssec:sph-lap2-prec-gauss}

In this section we assume that the smoothing kernel of Definition~\ref{defi:kernel} is given by the squared exponential (Gaussian) function with bandwidth $h$:
\beq
\label{eq:square-expo-kernel}
\Kse(\bfr;h) = \frac{1}{\left( h\sqrt{\pi}\right)^d}\exp\left[ - \lVert \bfr \rVert^2/h^2 \right].
\eeq

\begin{lemma}[Gaussian spatial interaction smoothing function]
\label{lemma:K2}
Let $\Kse(\bfr;h)$, where $\bfr \in \Rd$, be the squared exponential kernel function with Fourier transform $\Kseft(\bfk h)$; also let $r=\rr$. The spatial interaction smoothing function   $K^{(2)}(r;h) =\ift\left[ \lVert \Kseft(\bfk h) \rVert^{2}\right]$,  is given by
\beq
\label{eq:K2}
K^{(2)}(r;h) =  \Kse(\bfr;h') = \frac{1}{\left(h\,\sqrt{2\pi}\right)^{d}} \,\E^{- \lVert \bfr \rVert^2/2 h^2}, \; \mbox{where} \; h'=\sqrt{2}h.
\eeq

\end{lemma}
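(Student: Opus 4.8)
The plan is to compute everything in Fourier space, where the squared exponential kernel is especially convenient, and then invert. First I would recall that the Gaussian kernel $\Kse(\bfr;h) = (h\sqrt{\pi})^{-d}\exp(-\|\bfr\|^2/h^2)$ is a product of one-dimensional Gaussians, and that its Fourier transform is again a Gaussian. Using the convention in Eq.~\eqref{eq:covft}, a direct computation of the single-variable integral $\int_{\R}\D s\, \E^{-\I k s}\E^{-s^2/h^2}$ by completing the square gives a Gaussian in $k$, and taking the $d$-fold product yields
\beq
\Kseft(\bfk;h) = \E^{-\|\bfk\|^2 h^2/4}.
\eeq
(The prefactor $(h\sqrt{\pi})^{-d}$ is exactly what makes the transform have unit value at $\bfk=\bfo$, consistent with the normalization property~(4) of Definition~\ref{defi:kernel}.) I would present this as the first displayed step, possibly citing it as a standard Gaussian-integral identity rather than grinding through the completion of the square in detail.

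Next I would form the squared modulus. Since $\Kseft(\bfk;h)$ is real and positive, $\lVert \Kseft(\bfk;h)\rVert^2 = \E^{-\|\bfk\|^2 h^2/2}$. The key observation is that this is again a pure Gaussian in $\bfk$, now with the exponent governed by $h^2/2$ instead of $h^2/4$. The task then reduces to recognizing this expression as the Fourier transform of some Gaussian kernel and reading off its bandwidth. Comparing $\E^{-\|\bfk\|^2 h^2/2}$ with the general form $\E^{-\|\bfk\|^2 (h')^2/4}$ derived above (with $h$ replaced by $h'$), I match exponents via $(h')^2/4 = h^2/2$, which gives $h' = \sqrt{2}\,h$.

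Finally I would invoke the inverse Fourier transform. By the same transform-pair computation run in reverse (or simply by the uniqueness of Fourier transforms), the function whose FT is $\E^{-\|\bfk\|^2 (h')^2/4}$ is precisely the normalized Gaussian kernel $\Kse(\bfr;h')$ with bandwidth $h'$. Therefore
\beq
K^{(2)}(r;h) = \ift\!\left[\,\lVert \Kseft(\bfk;h)\rVert^2\,\right] = \Kse(\bfr;\sqrt{2}\,h) = \frac{1}{(h\sqrt{2\pi})^{d}}\,\E^{-\|\bfr\|^2/2h^2},
\eeq
where I substitute $h' = \sqrt{2}h$ into the prefactor $(h'\sqrt{\pi})^{-d} = (h\sqrt{2}\sqrt{\pi})^{-d} = (h\sqrt{2\pi})^{-d}$ to reach the stated closed form. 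This matches Eq.~\eqref{eq:K2} exactly.

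I do not expect any serious obstacle here: the whole proof rests on the fact that the Gaussian family is closed under Fourier transformation and under taking squared moduli, so the computation is essentially bookkeeping of the variance parameter. The only point demanding minor care is keeping the normalization constants consistent with the $(2\pi)^{-d}$ convention of the inverse transform in Eq.~\eqref{eq:invcovft} and verifying that the prefactor simplifies to $(h\sqrt{2\pi})^{-d}$; this is the step most likely to harbor an arithmetic slip, so I would double-check it by confirming that $K^{(2)}(r;h)$ integrates to one over $\Rd$ (as it must, being itself a normalized Gaussian), or equivalently that its FT equals $1$ at $\bfk=\bfo$, which indeed holds since $\lVert \Kseft(\bfo;h)\rVert^2 = 1$.
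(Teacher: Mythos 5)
Your proposal is correct and takes essentially the same route as the paper's own proof: establish $\Kseft(kh)=\E^{-k^2h^2/4}$, note that its squared modulus $\E^{-k^2h^2/2}$ is again a Gaussian with rescaled bandwidth $h'=\sqrt{2}\,h$, and invert the transform to read off $K^{(2)}(r;h)=\Kse(\bfr;h')$. The only cosmetic difference is that you derive the Gaussian transform pair by completing the square (and add a normalization sanity check), whereas the paper simply cites the identity.
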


\begin{proof}
The Fourier transform of the Gaussian function $\exp(-\rr^2/h^2)$ is also a Gaussian given by~\cite[p.~160]{dth20}
\[
\mbox{FT}\left[ \E^{-\rr^2/h^2}\right] =  \pi^{d/2}\, h^{d}\,
\exp\left( - k^2 h^2/4\right), \; \mbox{where} \; k=\kk.
\]
Hence, the Fourier transform of the smoothing kernel of Eq.~\eqref{eq:square-expo-kernel} is given by the Gaussian function
\beq
\label{eq:fft-gauss}
\Kseft(k h) = \exp\left( - k^2 h^2/4\right).
\eeq
Thus, the modulus squared of $\Kseft(k h)$ is also a Gaussian function given by
\beq
\label{eq:fft-gauss-square}
\lVert \Kseft(k h) \rVert^{2} = \E^{- k^2 h^2/2} =
\Kseft(k h'), \; \mbox{where} \;  h' = \sqrt{2} h.
\eeq
Based on Eq.~\eqref{eq:fft-gauss-square}, the inverse Fourier transform of $K^{(2)}(r;h)$ is given by Eq.~\eqref{eq:K2}.
\end{proof}

\medskip

\begin{theorem}[SPH-LAP2 precision function based on Gaussian smoothing kernel]
\label{theorem:sph-lap2-prec-gauss}
Let us consider the SPH-LAP2 precision function $Q^{\ast}(\bfr;\bmthe,h)$  defined by means of Eq.~\eqref{eq:precision-function-ift} and equipped with the squared-exponential kernel.  Then, the precision function in the spectral domain is given by
\beq
\label{eq:Qk-gauss}
\tilde{Q}^{\ast}(k;\bmthe,h) =   \E^{-k^2h^2/2}\, \left( \theta_{0} + \theta_{1}  k^{2}   + \theta_{2}  k^{4} \right), \; \mbox{where} \; k= \kk.
\eeq
The precision function $Q^{\ast}(\bfr;\bmthe,h)$  is given by the following radial function
\begin{align}
\label{eq:precision-squared-exponential}
Q^{\ast}(r;\bmthe,h)  = \frac{\E^{-r^2/2h^2}}{\left(h\,\sqrt{2\pi}\right)^{d}} \, & \left\{ \theta_{0}\, -  \,\frac{\theta_{1}}{h^{2}}\, \left(  \frac{r^2}{h^2} - d\right) +
\frac{\theta_{3}}{h^4}\left[\frac{r^4}{h^4} - 2(d+2) \frac{r^2}{h^2} + d(d + 2) \right] \right\}.
\end{align}

\end{theorem}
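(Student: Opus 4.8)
The plan is to leverage the two results already in place: the general spatial-domain formula for the precision function from Theorem~\ref{theorem:sph-lap2-prec} (specifically the radial-derivative expansion~\eqref{eq:precision-sph}) and the explicit Gaussian spatial interaction smoothing function from Lemma~\ref{lemma:K2}. The spectral identity~\eqref{eq:Qk-gauss} is essentially immediate: substituting the squared modulus $\lVert \Kseft(kh)\rVert^{2} = \E^{-k^2h^2/2}$ from Eq.~\eqref{eq:fft-gauss-square} into the integrand~\eqref{eq:spectral-function} that defines the transform~\eqref{eq:precision-function-ift} yields $\tilde{Q}^{\ast}(k;\bmthe,h) = \E^{-k^2h^2/2}(\theta_0 + \theta_1 k^2 + \theta_2 k^4)$ directly.

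For the spatial-domain expression~\eqref{eq:precision-squared-exponential}, I would insert $K^{(2)}(r;h) = (h\sqrt{2\pi})^{-d}\,\E^{-r^2/2h^2}$ from Lemma~\ref{lemma:K2} into Eq.~\eqref{eq:precision-sph} and evaluate the Laplacian and Bi-Laplacian termwise. Writing $g(r) = \E^{-r^2/2h^2}$, the radial derivatives reduce cleanly via $g'(r) = -(r/h^2)\,g$ and $g''(r) = (r^2/h^4 - 1/h^2)\,g$. The Laplacian formula~\eqref{eq:laplacian-radial} then gives $\triangle K^{(2)} = (h\sqrt{2\pi})^{-d}\,h^{-2}(r^2/h^2 - d)\,g$, which reproduces the $\theta_1$ contribution together with its sign.

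The Bi-Laplacian term is where the bookkeeping concentrates. Rather than differentiate four times by hand, I would apply Lemma~\ref{lemma:bilaplacian} to $K^{(2)}$, which requires the third and fourth radial derivatives of $g$; each of these is obtained by repeatedly using $g' = -(r/h^2)\,g$ to collapse every derivative into a polynomial in $r$ times $g$. Collecting the coefficients of $r^4$, $r^2$, and the constant yields $\triangle^2 K^{(2)} = (h\sqrt{2\pi})^{-d}\,h^{-4}[\,r^4/h^4 - 2(d+2)\,r^2/h^2 + d(d+2)\,]\,g$. Assembling the three pieces with weights $\theta_0$, $-\theta_1$, and $\theta_2$ and factoring out the common Gaussian prefactor $(h\sqrt{2\pi})^{-d}\E^{-r^2/2h^2}$ gives Eq.~\eqref{eq:precision-squared-exponential}.

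The main obstacle is purely computational: the Bi-Laplacian generates several terms carrying mixed powers of $1/r$ and $1/r^2$ against derivatives of $g$, and one must check that these nominally singular contributions recombine into the finite polynomial shown, with no residual $1/r$ behavior at the origin. A convenient cross-check is to verify that the spatial result equals the inverse Fourier transform of the spectral form~\eqref{eq:Qk-gauss}: since $\ift[\E^{-k^2h^2/2}] = K^{(2)}(r;h)$ and multiplication by $\kk^{2n}$ in Fourier space corresponds to applying $(-\triangle)^{n}$ in position space, the two derivations must agree, which provides an independent test on the algebra.
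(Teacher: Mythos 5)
Your proposal is correct and follows essentially the same route as the paper's proof: the spectral form~\eqref{eq:Qk-gauss} by substituting Eq.~\eqref{eq:fft-gauss-square} into Eq.~\eqref{eq:precision-function-ift}, and the spatial form~\eqref{eq:precision-squared-exponential} by inserting the Gaussian $K^{(2)}(r;h)$ of Lemma~\ref{lemma:K2} into Eq.~\eqref{eq:precision-sph} and evaluating the Laplacian and Bi-Laplacian via Eq.~\eqref{eq:laplacian-radial} and Lemma~\ref{lemma:bilaplacian}, with the nominally singular $\Or(r^{-2})$ contributions cancelling as you anticipate. The only cosmetic difference is bookkeeping: the paper organizes the radial derivatives of the Gaussian through Hermite polynomials, Eq.~\eqref{eq:derivative-gauss}, whereas you iterate $g'(r)=-(r/h^{2})\,g(r)$ directly, and both yield the same polynomial coefficients.
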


\medskip

\begin{proof}

The precision spectral function  is obtained from the IFT of  Eq.~\eqref{eq:precision-function-ift} and the Fourier transform~\eqref{eq:fft-gauss-square} of the squared exponential kernel. All the factors involved are radial functions of $k=\kk$. Hence, Eq.~\eqref{eq:Qk-gauss} is proved.

As is evident in Eq.~\eqref{eq:Qk-gauss},  $\tilde{Q}^{\ast}(k;\bmthe,h)$ decays exponentially with $k$ and  lacks singularities on the real line; this  ensures that $\tilde{Q}^{\ast}(k;\bmthe,h)$  is absolutely integrable so that its inverse Fourier transform exists. Furthermore, since the parameter vector $\bmthe$ satisfies the conditions of Theorem~\ref{theorem:gaussian}, based on Bochner's theorem the inverse Fourier transform of Eq.~\eqref{eq:Qk-gauss} is a positive definite  function $Q^{\ast}(\bfr;\bmthe,h)$.

Based on Eq.~\eqref{eq:precision-function-ift}, $Q^{\ast}(\bfr;\bmthe,h)$ is obtained from the following superposition of inverse Fourier transforms
\[
Q^{\ast}(\bfr;\bmthe,h) =   \theta_{0}\,\ift[\Kseft(k h')] + \theta_{1} \,  \ift[k^{2} \, \Kseft(k h')]
+ \theta_{2}\,  \ift[k^{4} \,\Kseft(k h')].
\]
Since the image of the Laplace operator in the Fourier domain is $-k^2$, the second and third inverse Fourier transforms in the above equation are evaluated as follows:
\[
\ift[\,k^{2} \, \Kseft(k h')\,] = - \triangle \Kse(\bfr; h'), \quad
\ift[\,k^{4} \, \Kseft(k h')] =  \triangle^2 \Kse(\bfr; h') \,.
\]
Taking into account the definition of the Gaussian spatial interaction smoothing function in Eq.~\eqref{eq:K2}, the SPH-LAP2 precision function becomes
\beq
\label{eq:precision-function-gauss-1}
Q^{\ast}(\bfr;\bmthe,h) = \left[ \theta_{0}\,K^{(2)}(r;h)  - \theta_{1} \, \triangle K^{(2)}(r;h)
+ \theta_{2} \triangle^2 K^{(2)}(r;h)  \right].
\eeq
The Laplacian and Bi-Laplacian of the radial function $K^{(2)}(r;h)$  involves up to fourth-order derivatives of the Gaussian function $\exp(-r^2/2h^2)$,   according to Eqs.~\eqref{eq:laplacian-radial} and~\eqref{eq:bilaplacian-radial}.
These derivatives can be evaluated by means of the \emph{Hermite polynomials} $\He_{n}(x): \, \R \to \R$ of order $n\in \{1,2,3,4\}$,  which are defined  as follows:
\beq
\label{eq:hermite}
\He_{n}(x)= (-1)^n \exp(x^2/2) \, \frac{ \D^{n}\exp(-x^2/2) }{\D x^n}.
\eeq
In particular, the Hermite polynomials for $n=1,2,3,4$ are given by
\beq
\label{eq:hermite-1-4}
\He_{1}(x)=x, \; \He_{2}(x)=x^2-1, \; \He_{3}(x)=x^{3} - 3x, \; \He_{4}(x)=x^{4} - 6x^{2}+3\, .
\eeq

According to Eq.~\eqref{eq:hermite}, the $n$-th order derivative of the Gaussian  $\Kse(r; h)=\exp(-r^2/2h^2)$ is given by means of the Hermite polynomials, $\He_{n}(x)$,  as follows
\beq
\label{eq:derivative-gauss}
\frac{\D^{n} \E^{-r^2/2h^2}}{\D r^n} =
\frac{(-1)^{n}}{ h^n}\, \He_{n}\left(\frac{r}{ h}\right)\,  \E^{-r^2/2h^2}, \quad n \in \Na.
\eeq

Then, based on Eq.~\eqref{eq:laplacian-radial} and Eq.~\eqref{eq:derivative-gauss},  the Laplacian of $K^{(2)}(r;h) $ is given by
\beq
\label{eq:laplacian-kse-1}
 \triangle K^{(2)}(r;h)  = \frac{\E^{-r^2/2h^2}}{\left(h\,\sqrt{2\pi}\right)^{d}} \, \left[ \frac{1}{ h^{2}}\, \He_{2}\left(\frac{r}{ h}\right) +
\left( \frac{1-d}{r h} \right) \,\He_{1}\left(\frac{r}{ h}\right) \right]\, .
\eeq
Using the explicit expressions for $\He_{2}(\cdot)$ and $\He_{1}(\cdot)$ from Eq.~\eqref{eq:hermite}, $\triangle K^{(2)}(r;h)$ becomes
\begin{align}
\label{eq:laplacian-kse-2}
 \triangle K^{(2)}(r;h)  = & \frac{\E^{-r^2/2h^2}}{\left(h\,\sqrt{2\pi}\right)^{d}} \, \left[ \frac{1}{ h^{2}}\, \left(\frac{r^2}{h^2} -1 \right)+
 \frac{1-d}{h^2}  \right]
 \nonumber \\
 = &
 \frac{\E^{-r^2/2h^2}}{\left(h\,\sqrt{2\pi}\right)^{d}} \, \left[ \frac{1}{ h^{2}}\, \left(\frac{r^2}{h^2} -d \right) \right] \,.
\end{align}

Similarly, based on Eqs.~\eqref{eq:bilaplacian-radial} and~\eqref{eq:derivative-gauss} the Bi-Laplacian of $\Kse(r; h')$ is given by
\begin{align}
\label{eq:bilaplacian-kse}
 \triangle^{2} K^{(2)}(r;h) =   \frac{\E^{-r^2/2h^2}}{\left(h\,\sqrt{2\pi}\right)^{d}} \, & \left[ \frac{1}{h^4}\He_{4}\left(\frac{r}{h}\right) +
 \frac{2(1-d)}{rh^3} \,\He_{3}\left(\frac{r}{h}\right)
\right.
 \nonumber \\
  & \quad \left.
  +\frac{(d-1)^2 -2(d-1)}{(rh)^{2}} \,\He_{2}\left(\frac{r}{h}\right)\right.
\nonumber \\
  & \quad \left.  +  \frac{(d-1)^2 - 2(d-1)}{r^{3} h}  \He_{1}\left(\frac{r}{h}\right)\right]\,.
\end{align}

Note that the right hand-side of  Eq.~\eqref{eq:bilaplacian-kse} includes two singular terms $\Or(r^{-2})$ coming from the Hermite polynomials $\He_{2}(x)$ and $\He_{1}(x)$, cf. Eqs.~\eqref{eq:hermite-1-4}.  However, the singular terms cancel each other. Thus, the sum of  terms proportional to $\He_{2}(x)$ and $\He_{1}(x)$ becomes $h^{-4}\,[(d-1)^2 - 2(d-1)]$.  The remaining terms $\sim \Or(h^{-4})$ are evaluated using the expressions for $\He_{4}(x)$ and $\He_{3}(x)$ from Eqs.~\eqref{eq:hermite-1-4}.

Finally, based on  Eq.~\eqref{eq:laplacian-kse-2} for the Laplacian of $K^{(2)}(r;h)$, Eq.~\eqref{eq:bilaplacian-kse} for the Bi-Laplacian of $K^{(2)}(r;h)$,  and Eq.~\eqref{eq:precision-function-gauss-1} for $Q^{\ast}(r;\bmthe,h)$,  the expression Eq.~\eqref{eq:precision-squared-exponential} is obtained for the SPH-LAP2 precision function
equipped with the Gaussian kernel.
\end{proof}

\medskip

\begin{remark}[Non-stationarity]
\label{rem:nonsta}
The SPH-LAP2 random field based on the Gaussian smoothing kernel is not stationary for $h \neq 0$.  To see this, consider that the Fourier transform, $\tilde{Q}^{\ast}(k;\bmthe,h)$, of the precision function is given by  Eq.~\eqref{eq:Qk-gauss}.
If the SPH-LAP2 random field were stationary,
the inverse of the spectral image of the precision operator, i.e., $1/\tilde{Q}(k;\bmthe,h)$,  would correspond to the field's spectral density.  However, as is evident from Eq.~\eqref{eq:Qk-gauss},  $\tilde{Q}^{-1}(k;\bmthe,h)$ is not integrable over $\Rd$ due to the exponential increase of the Gaussian function with $k$; therefore, it is
not a valid spectral density according to Bochner's theorem~\cite{Bochner59}.

\end{remark}

\begin{remark}[Spatial interaction at zero separation]
Based on Eq.~\eqref{eq:precision-squared-exponential} and the definition of the Hermite polynomials given in Theorem~\ref{theorem:sph-lap2-prec-gauss} it follows that the precision function at zero separation (lag) for $h>0$ is given by
\[
Q^{\ast}(0;\bmthe,h) = \frac{1}{\left(h\,\sqrt{2\pi}\right)^{d}} \, \left[ \theta_0  + \frac{\theta_{1} d}{h^2} + \frac{\theta_{2}(d^2 +2 d)}{h^4}   \right].
\]
It is straightforward to show that $Q^{\ast}(0;\bmthe,h)$ is the global maximum of $Q^{\ast}(r;\bmthe,h)$.  Then,  the above equation implies that for a fixed parameter vector $\bmthe$ the value $Q^{\ast}(0;\bmthe,h)$ is lowered as $h$ increases, marking a weakening of the spatial coupling. On the other hand, since $\bmthe$ is a vector of free parameters, it is possible to choose $\theta_{0} \propto h^d$, $\theta_{1} \propto h^{d+2}$ and $\theta_{2} \propto h^{d+4}$ so that finite values of $Q^{\ast}(0;\bmthe,h)$ are obtained even for $h\to 0$.
\end{remark}

\medskip

Plots of the normalized SPH-LAP2 precision function $Q^{\ast}(r;\bmthe,h)/Q^{\ast}(0;\bmthe,h)$ based on the Gaussian kernel are shown in Fig.~\ref{fig:fgc-sph}.  The curves are obtained from Eq.~\eqref{eq:precision-squared-exponential} for $d=2$ and $h=1.5$ (kernel bandwidth).  Three different parameter sets are used. The curves exhibit negatively-valued valleys over a range of lags. All three curves approach zero as $r \to \infty$, either from the positive or the negative side.  Negative values of the precision function for a given separation favor opposite signs of the field values at this distance, because they tend to lower the energy in Eq.~\eqref{eq:Ha-Q}.


\begin{figure}[tb]
\includegraphics[width=0.85\textwidth]{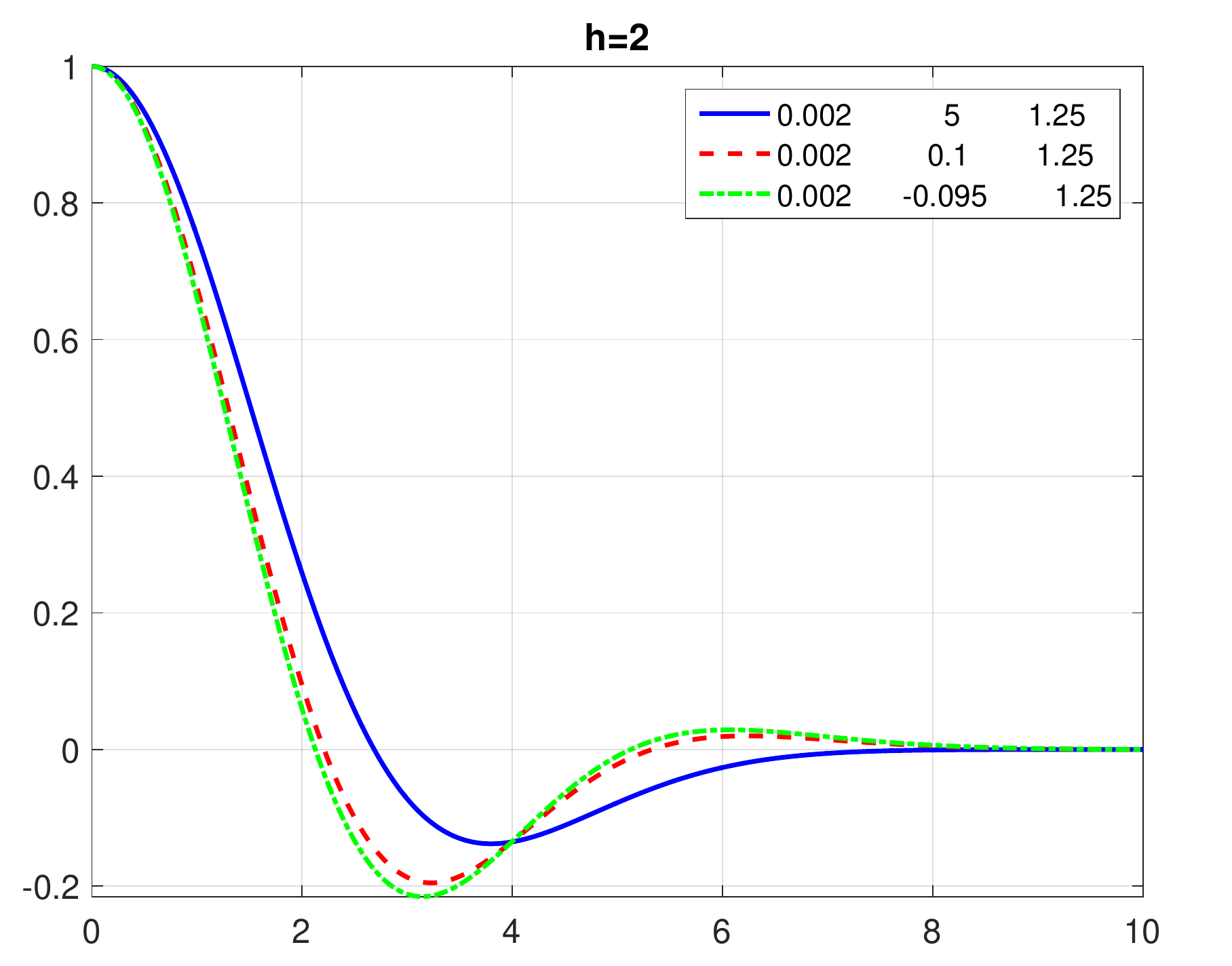}
\caption{Plot of the SPH-LAP2 precision function $Q^{\ast}(r;\bmthe,h)$ defined by Eq.~\eqref{eq:precision-squared-exponential} for $d=2$ and different parameter vectors $\bmthe$. Continuous line (blue): $\bmthe=(0.002, 5, 1.25)$.  Broken line (red): $\bmthe=(0.002, 0.1, 1.25)$. Dash-dot line (green): $\bmthe=(0.002, -0.095, 1.25)$.  The bandwidth of the squared exponential kernel is $h=1.5$.  All curves are normalized by dividing with $Q^{\ast}(0)$. }
\label{fig:fgc-sph}
\end{figure}

\section{The Continuum Case}
\label{sec:sph-lap2-continuum}

Theorem~\ref{theorem:sph-lap2-disc} presents the SPH-LAP2 joint pdf based on a discrete set of observations at the points in the set  $\Samp$. This can be extended to a continuum-space formulation by replacing the smoothed field of Eq.~\eqref{eq:x-sph} with the continuum convolution integral
\beq
\label{eq:x-sph-cont}
x^{\ast}(\bfs) = \int_{\Do} \D\bfs_{1}\, K(\bfs-\bfs_{1};h)\, x(\bfs_{1}).
\eeq
In contrast with Definition~\eqref{defi:sph-pd} (which is valid in the discrete case), in order to define the Gaussian measure in continuum space the points $\bfs$ should belong  to a bounded domain $\Do$  instead of $\Rd$---as shown in Theorem~\ref{theorem:gaussian}.

Then, the energy functional of Eq.~\eqref{eq:Ha-sph-lap2-disc} is replaced by

\beq
\label{eq:Ha-sph-lap2-cont}
\Ha[x({\Do}); \bmthe]=\frac{1}{2}\,
\int_{\Do}\D\bfs_{1}\,\int_{\Do}\D\bfs_{2}\,
x(\bfs_{1})\, Q^{\ast}(\bfs_{1}-\bfs_{2})\, x(\bfs_{2}),
\eeq
where the kernel $Q^{\ast}(\bfs_{1}-\bfs_{2};\bmthe,h)$ corresponds to the SPH-LAP2 precision function of Eq.~\eqref{eq:precision-sph}.
The integral formulation  extends the precision function~\eqref{eq:precision-sph} to  continuum,  bounded domains $\Do$.

\begin{remark}[Zero-bandwidth limit]
At the limit $h \to 0$ the FFT of the Gaussian kernel given by Eq.~\eqref{eq:fft-gauss} satisfies $\lim_{h \to 0 }\Kseft(k h) =1$. Hence, the  spectral function  Eq.~\eqref{eq:Qk-gauss} tends to the polynomial expression $\theta_{0} + \theta_{1} k^{2}   + \theta_{2} k^{4}$ which is the inverse spectral density (i.e., the image of the precision operator in Fourier space) of the FGC Boltzmann-Gibbs random fields (aka, Spartan random fields)~\cite{dth03,dthsel07}.
\end{remark}

\begin{figure}[tb]
\includegraphics[width=0.99\textwidth]{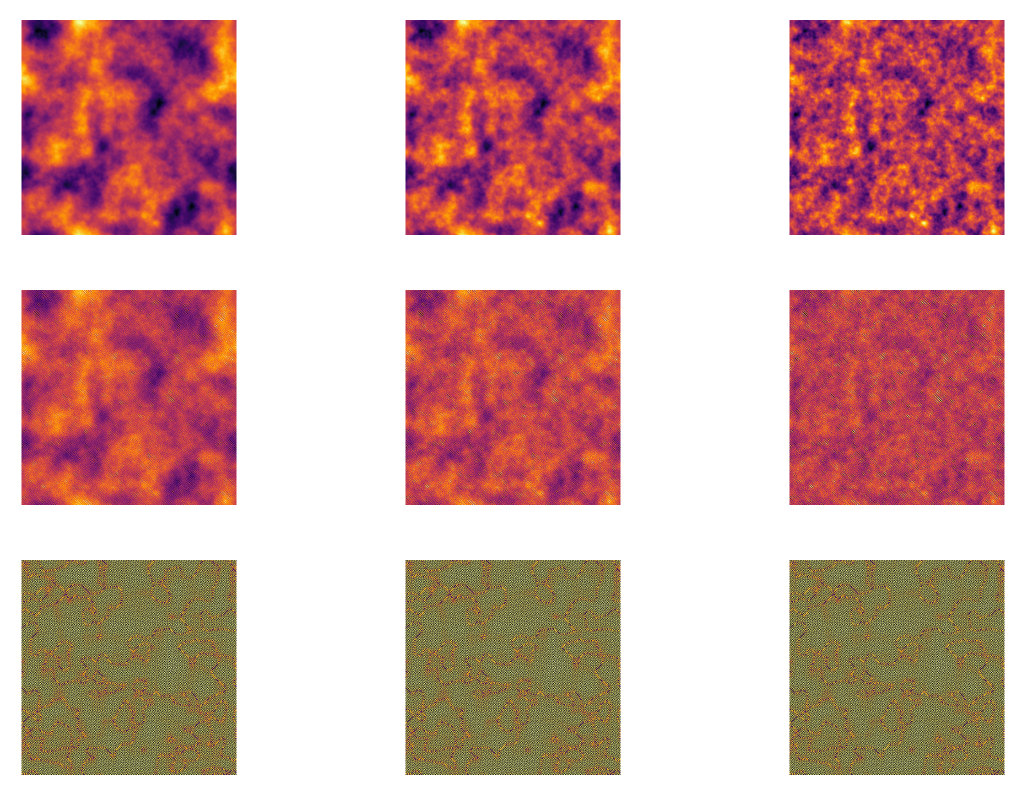}
\caption{Simulated states of SPH-LAP2 random fields with squared exponential kernel on $L \times L$ square grid ($L=256$). Top row: $h=0.05$. Middle row: $h=0.8$. Bottom row: $h=1.5$.  The field parameters are $\bmthe=\frac{1}{4\pi\xi^2} (1, 2\xi^2, \xi^4)^\top$: Left column:
$\xi=20$.     Middle column:  $\xi=10$.  Right column: $\xi=5$.}
\label{fig:sph-lap2-states}
\end{figure}

In the following, we use lattice simulations to visualize the Boltzmann-Gibbs SPH-LAP2  fields. The field states are generated using the spectral Fast Fourier Transform algorithm~\cite[pp.~701-708]{dth20}.
Figure~\ref{fig:sph-lap2-states} shows such  simulated states  on an $L \times L$ square grid ($L=256$) with unit lattice step for three different parameter vectors.  Different rows  (from top to bottom)  display random fields with $h=0.05, 0.8, 1.5$ respectively.  The parameter vector used is $\bmthe = \frac{1}{4\pi\xi^2} (1, 2\xi^2, \xi^4)^\top$.  This parametrization for $h=0$ implies that  Eq.~\eqref{eq:Qk-gauss} corresponds to $1/\tilde{C}(k)$, where $\tilde{C}(k)= 4\pi\xi^2/(1+ k^2\xi^2)^2$ is the spectral density of the Mat\'{e}rn covariance with smoothness index $\nu=1$, e.g.~\cite{Rue11,dth20}. The respective covariance function is given by $C(r)=(r/\xi) \,K_{1}(r/\xi)$, where $K_{1}(\cdot)$ is the modified Bessel function of the second kind of order $\nu=1$. The field configurations shown along each column of Fig.~\ref{fig:sph-lap2-states} differ with respect to $\xi$ which takes values $20, 10, 5$  (from left to right).  Lighter areas correspond to higher values, while darker areas to lower values of the field. All nine field states were generated using the same set of random numbers for comparison purposes.

The connection between the Mat\'{e}rn  $\nu=1$ and the SPH-LAP2 fields for  $h=0$ is illustrated by means of the variogram plots shown in Fig.~\ref{fig:sph-lap2-variograms}.  The variogram function is defined as $\gamma(\bfr)= \frac{1}{2} \mathrm{Var}[X(\bfs+\bfr;\om)- X(\bfs;\om)]$, where $\mathrm{Var}$ is the variance operator. Empirical variograms of the simulated states for $h=0.05$ are estimated along the rows and columns of the grid using the standard method-of-moments estimator~\cite{Cressie93}. The plots in Fig.~\ref{fig:sph-lap2-variograms} display averages of the row and column variograms and compare them with the $\nu=1$ Mat\'{e}rn model, $\gamma(r)=1 - (r/\xi) \,K_{1}(r/\xi)$.  The agreement is very good, especially for $\xi=5$, in spite of the finite ($h=0.05$) kernel bandwidth; for larger $\xi$ the observed small deviations between the Mat\'{e}rn model and the empirical curves are due to non-ergodic fluctuations of the field configurations caused by the higher  $\xi/L$ ratio.

\begin{figure}[tb]
\includegraphics[width=0.99\textwidth]{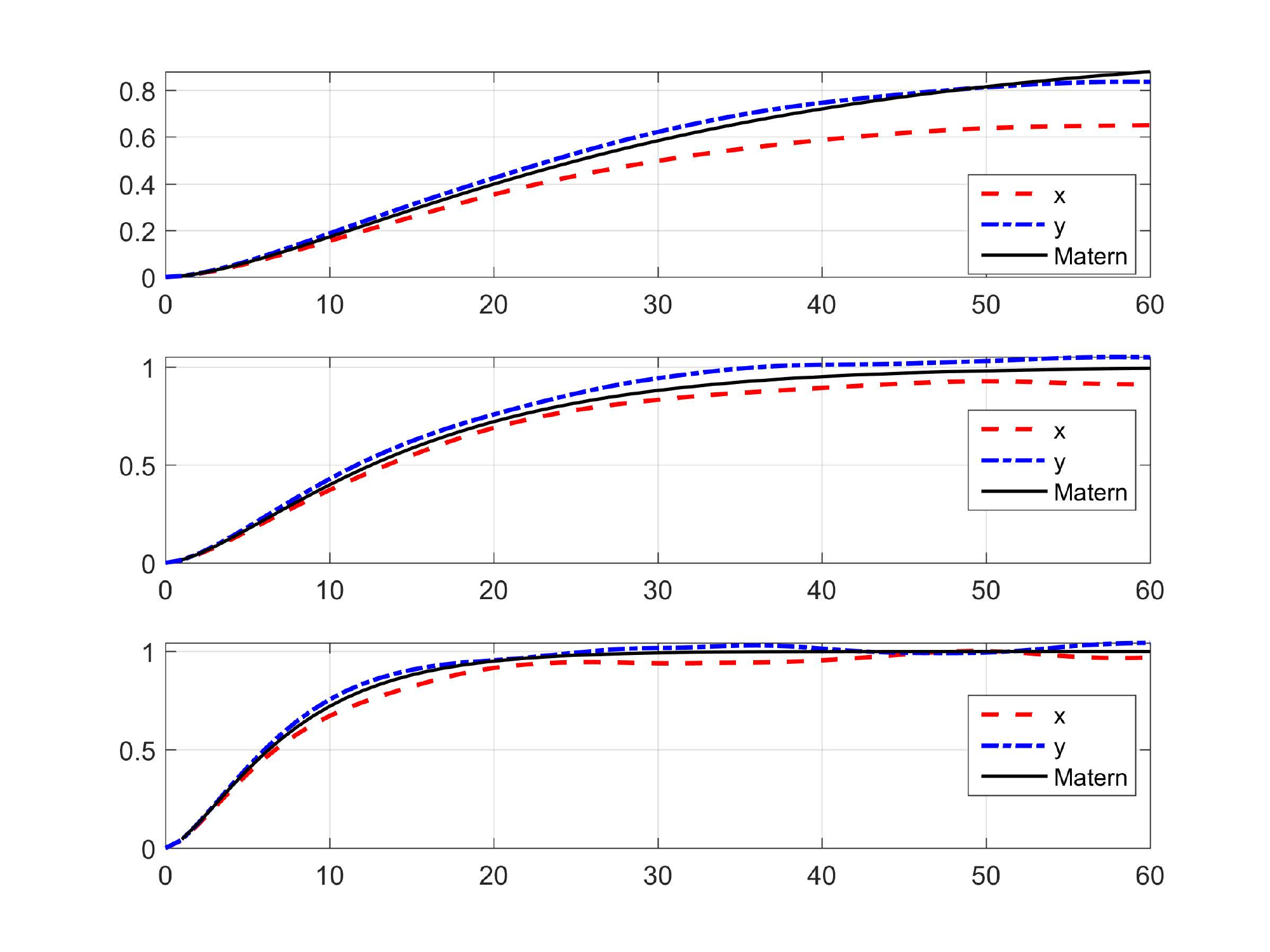}
\caption{Estimated variograms of the states shown in the first row of Fig.~\ref{fig:sph-lap2-states} (i.e., for $h=0.05$) and comparison with the Mat\'{e}rn variogram model $\gamma(r)=1 - (r/\xi) \,K_{1}(r/\xi)$ (continuous line). From top to bottom, $\xi=20, 10, 5$. Dash (dash-dot) lines show the average one-dimensional empirical variogram along the rows (columns) based on the method of moments.}
\label{fig:sph-lap2-variograms}
\end{figure}

\begin{remark}[Impact of kernel bandwidth]
The field configurations shown in Fig.~\ref{fig:sph-lap2-states} become increasingly rougher as $h$ increases. Intuitively this behavior can be understood as follows: Based on the Fourier transform of the precision function, Eq.~\eqref{eq:Qk-gauss}, higher $h$ implies that the spatial coupling  induced by the precision operator is strongly attenuated at high frequencies $k$ due to the factor $\exp(-k^2\,h^2/2)$. As a result, the spatial continuity imposed by the coupling  weakens at short distances (large $k$). This leads to sawtooth-like oscillations of the empirical variogram  which diminish with increasing lag (see Fig.~\ref{fig:sph-lap2-state-large-h}).
\end{remark}


\begin{remark}[Spatial patterns for  $h=1.5$]
The field states for $h=1.5$ (bottom row), exhibit a rather unusual behavior which is marked by alternating high and low values at neighboring sites and  ``islands'' of checkerboard patterns which correspond to variations in the magnitude of the fluctuations.  In addition, the impact of $\theta_{1}$ on the spatial patterns is not visible.  The main reason for this behavior is that values of $h>1$ emphasize the exponential decline of  the precision function and thus attenuate the coupling between neighboring sites.  A different way to interpret this pattern is by realizing that  $1/\tilde{Q}^{\ast}(k;\bmthe,h)$ has a sharp increase for large $k$ due to the factor $\exp(k^{2}h^{2}/2)$. This leads to large fluctuations of the field values at short distances. At low  $k$ such that $k\,h \ll 1$ (i.e., for large length scales) the impact of the exponential term is $\approx 1$, and the  polynomial term shapes the configuration of the ``islands''.
\end{remark}

In order to better visualize the spatial patterns of the field for $h=1.5$, Fig.~\ref{fig:sph-lap2-state-large-h} shows a spatial configuration generated on a smaller  grid ($L=64$).
As explained above, the SPH-LAP2 model tends to favor  opposite-sign values at neighboring sites.  It is thus similar to antiferromagnetic materials in which magnetic moments tend to align antiparallel to  neighboring moments~\cite{Neel48}. This is  evidenced in the sawtooth  oscillations of the variograms shown in Fig.~\ref{fig:sph-lap2-state-large-h}. However, in contrast to the classical Ising antiferromagnetic model which admits spin values equal to $\pm 1/2$~\cite{Ising25},  the SPH-LAP2 field is not limited to discrete values.

\begin{figure}[!ht]
\includegraphics[width=0.9\textwidth]{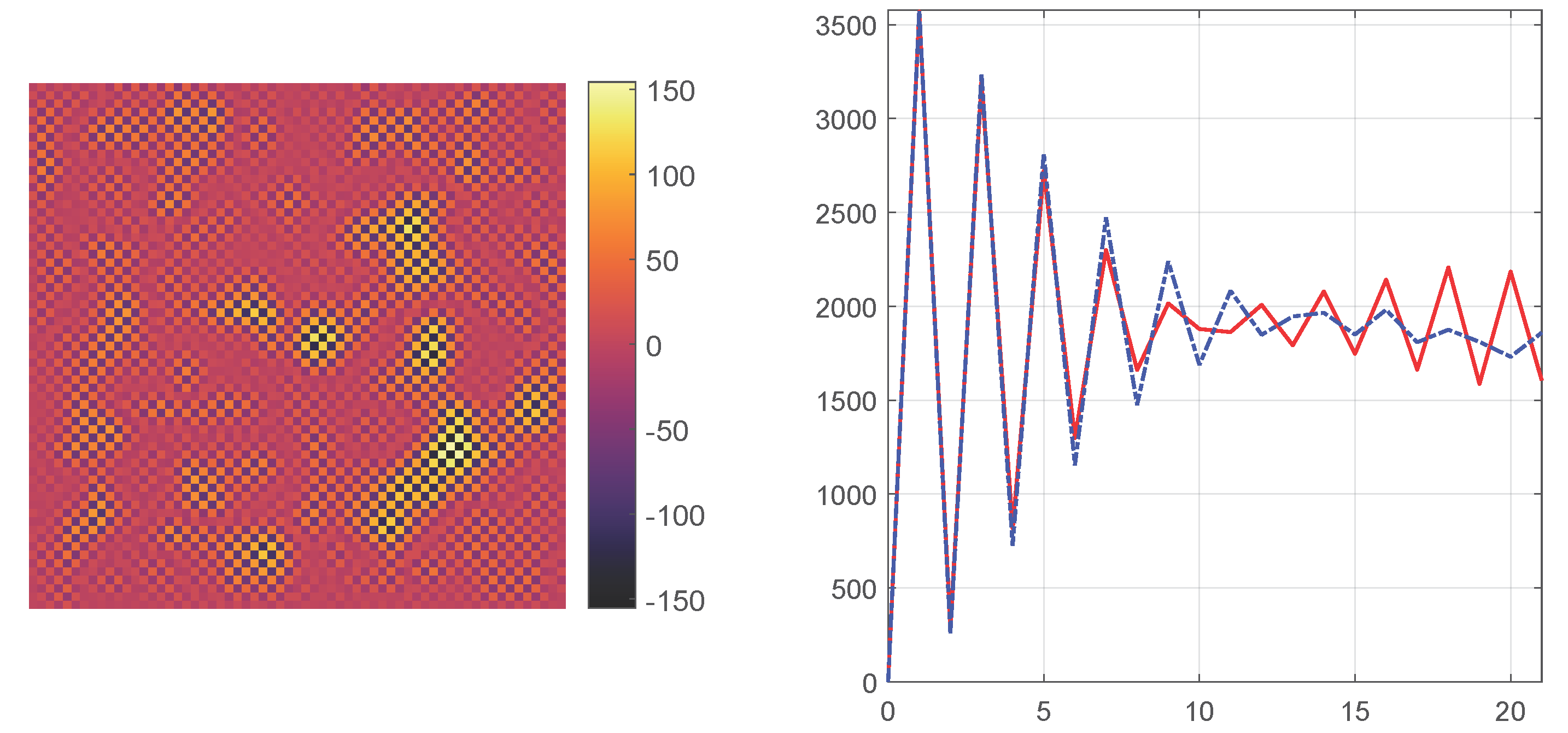}
\caption{Left: Simulated state  of SPH-LAP2 random field with squared exponential kernel on a square grid with $L=64$  nodes per side. Field parameters: $h=1.5$,  $\bmthe=\frac{1}{4\pi\xi^2} (1, 2\xi^2, \xi^4)$ with $\xi=20$.   Right: Average row and column variograms.   }
\label{fig:sph-lap2-state-large-h}
\end{figure}


\section{Discussion}
\label{sec:discussion}

The formulation presented herein was motivated by previous works by the author on Boltzmann-Gibbs random fields~\cite{dth03,dth20,ejs21} and the theory of smoothed particle hydrodynamics~\cite{Monaghan92,Monaghan05}. The present work combines a local precision operator constructed from the second-order polynomial of the Laplacian and smoothing kernel functions.  The link between covariance functions and polynomials of the Laplace operator has also been investigated in~\cite{dth03,Yaremchuk11}.  This is based on earlier ideas on the connections between covariance functions, rational spectral densities, Markov random fields and stochastic partial differential equations, e.g.~\cite{Rozanov77,Yaglom87,Rue11,dth20}.  Higher-order polynomials of the Laplace operator can be used to derive more general, SPH-LAP$n$  precision functions, where $n\in \Na$ and $n>2$.  Such precision functions can model more complex spectral properties, including non-monotonic dependence on $k$ which is relevant for wave phenomena. Another interesting direction of research is the investigation of  non-Gaussian smoothing kernel functions.  While the Gaussian kernel leads readily to explicit expressions,  it also imposes a drastic cutoff of the interactions at wavenumbers exceeding $\approx 1/h$.  Furthermore, the SPH-LAP2 function derived herein was based on uniform SPH weights $v_n$ in Eq.~\eqref{eq:Qnm}.  Richer, potentially non-stationary models can be achieved by allowing local variations of SPH weights.  Connections with machine learning models can be explored in this framework.

In the process convolution approach proposed by Higdon ~\cite{Higdon02},  observable random fields are defined via the convolution of a latent process---which is assumed to have a known covariance---with a kernel function.  The covariance of the  observed random field is then calculated by means of convolution integrals that involve the kernel function and the  latent process covariance.
In the SPH approach,  the latent field $\xk(\bfs)$ is defined by the weighted sum of the sample values given in Eq.~\eqref{eq:x-sph}, where the weights are provided by the kernel function.  However, the  covariance function of $\xk(\bfs)$ is not known. Instead, the values of $\xk(\bfs)$ are coupled to each other via differential operators, which enforce the spatial correlations.  Hence, the SPH representation focuses by construction on the precision operator. Another difference between the Higdon and the SPH representations is that in the former the observable process is defined by  the convolution of a kernel function with a latent field which is not directly observable, while the SPH convolution in Eq.~\eqref{eq:x-sph}, and its continuum analogue Eq.~\eqref{eq:x-sph-cont},  define the latent field $\xk(\bfs)$ in terms of the observed field values. The Higdon latent field can comprise a set of independent, identically distributed random variables; then, the covariance of the observable process is determined purely from the kernel function. This is not possible in the SPH representation, where the latent field is determined by  the kernel function and the sampled process which is spatially correlated.

The class of SPH-LAP2 random fields   can find applications in spatial statistics, statistical physics, and  machine learning.  In spatial statistics and machine learning the benefit of the model derives from the explicit form of the precision function. This permits the computationally fast estimation of the single-point predictive distribution even for  big datasets. The efficiency of the method is due to the fact that the inversion of the large $N \times N$ Gram matrix, necessary for covariance-based methods, is avoided.

In particular,  using standard expressions for GMRFs~\cite{Rue05,dth20},  the predictive pdf at any location $\bfs \in \Do$ is given by the following normal density

\begin{subequations}
\begin{align}
\ff\left(x_s \mid \bfx \right) =  & {\mathcal N}(m_{x_s \mid \bfx}, \sigma^{2}_{x_{s} \mid \bfx})
\\
m_{x_s \mid \bfx} = & - \sum_{n=1}^{N} \frac{Q^{\ast}(\bfs- \bfs_{n};\bmthe,h)}{Q^{\ast}(0;\bmthe,h)}\, x_{n},
\\
\sigma^{2}_{x_{s} \mid \bfx} = & \frac{1}{Q^{\ast}(0;\bmthe,h)},
\end{align}
\end{subequations}
where $m_{x_s \mid \bfx}$ and $\sigma^{2}_{x_{s} \mid \bfx}$ are, respectively, the conditional mean and variance given a vector of sample values $\bfx$ at the set  of sampling positions $\Samp$.

For applications  in statistical physics, the model enhances the standard Gaussian field theory~\cite{Kardar07,Goldenfeld93} by including the curvature term in the energy functional and introducing a gradual decline of the interactions with distance.  The correlation function and the free energy of this model are topics that deserve further investigation.
Notice that the curvature term is irrelevant for  long wavelengths (small $k$), which is the interesting limit in the study of phase transitions~\cite{Kardar07}.  However, the curvature term is pertinent in  Ginzburg-Landau theories of physical systems such as ternary amphiphilic mixtures~\cite{Gompper93a,Gompper93}. In such systems one is interested in the geometry of random field configurations; therefore the spectral properties at  intermediate wavelengths are also important.  Finally, it would be interesting to investigate  connections between  SPH-LAP2  and magnetic models.

The SPH-LAP2 model is expressed in terms of the precision function given by Eq.~\eqref{eq:precision-sph}. This leads to sparse precision matrices for compactly supported kernel functions.  As mentioned in the above paragraph, the ensuing formulation has a direct connection with GMRFs.  These properties (sparse precision matrix and GMRF connection) are shared by the popular stochastic partial differential equation (SPDE) approach~\cite{Rue11}, recently reviewed in~\cite{Lindgren22}.
Note that the BG-LAP2 model defined in Section~\ref{sec:gauss-bg} can be derived from an SPDE with a suitable differential operator which involves the first and the second power of the half-Laplacian~\cite{dth20}.  However, the realizations of BG-LAP2 do not admit pointwise derivatives (for $d>1$). In $d=1$ the BG-LAP2 model represents  a linear harmonic damped oscillator driven by white noise~\cite[Chap.~9]{dth20}. On the other hand, random fields with Mat\'{e}rn covariance,  obtained from the classic SPDE approach~\cite{Rue11}, admit (in mean square sense) partial derivatives up to order $m = \lfloor \nu \rfloor$, where $\nu$ is the smoothness parameter.
In the SPH-LAP2 formulation, the high-frequency interactions are cut off by the smoothing kernel [cf.~\eqref{eq:precision-function-ift}].  This leads to sharp transitions of the field for finite $h$ (cf. Fig.~\ref{fig:sph-lap2-states}). A comparison between the SPDE and SPH approaches can be carried further with respect to non-stationarity, non-Gaussian  models, parameter inference, and computational efficiency. However, this is beyond the scope of the current paper.  To summarize, the BG-LAP2 model can be derived from an SPDE, which is in general different than the Mat\'{e}rn-related SPDE; the  two models become equivalent only in $d=2$ if  $\nu=1$ (Mat\'{e}rn) and $\bmthe \propto (\xi^{-2}, 2, \xi^{2})^\top$ (BG-LAP2).  The SPH-LAP2 model is not at this point linked with an SPDE, but it provides (at least in the case of a squared-exponential smoothing kernel) an explicit precision function given by Eq.~\eqref{eq:precision-squared-exponential}.

\section{Conclusions}
\label{sec:conclusions}


This paper presents  a novel class of Gaussian Boltzmann-Gibbs random fields defined by means of a mesh-free spatial coupling (precision) function determined by Eq.~\eqref{eq:precision-function-ift}.  The latter is based on a smoothing kernel and a second-degree polynomial of the  Laplace operator.  The  smoothing kernel  is a key factor for  the mesh-free representation which can be applied to any spatial configuration of  sampling points. The SPH representation allows constructing Boltzmann-Gibbs energy functionals, such as Eq.~\eqref{eq:Ha-fgc-kernel}, that involve differential operators, even for discretely sampled data. Thus, the SPH-LAP2 model bridges the discrete (``particle'') and continuum (``wave'') representations. It extends the concept of Gaussian Markov random fields to a mesh-free formulation that can be applied in both continuous domains and scattered data. An explicit expression, given by Eq.~\eqref{eq:precision-squared-exponential}, is derived for the precision function using the Gaussian (squared exponential) smoothing kernel.



\end{document}
